\documentclass[a4paper,10pt]{article}
\usepackage{geometry}

\usepackage[T1]{fontenc}
\usepackage[utf8]{inputenc}

\usepackage{amssymb, amsthm, amsmath}

\usepackage{authblk}

\usepackage{color}

\usepackage{comment}

\theoremstyle{definition}
\newtheorem{definition}{Definition}

\theoremstyle{plain}
\newtheorem{theorem}{Theorem}
\newtheorem{proposition}[definition]{Proposition}

\newtheorem{lemma}[definition]{Lemma}
\newtheorem{remark}[definition]{Remark}
\newtheorem{corollary}[definition]{Corollary}

\newcommand{\w}{{\rm w}}
\newcommand{\dd}{{\rm d}}

\title{Upper bounds on the length of quasi--MDS codes}
\author{Umberto Mart{\'i}nez-Pe\~{n}as\thanks{umberto.martinez@uva.es} }
\author{Rub{\'e}n Rodr{\'i}guez-Ballesteros\thanks{ruben.rodriguez22@estudiantes.uva.es}}
\affil{IMUVa-Mathematics Research Institute,\\University of Valladolid, Spain}

\date{}

\begin{document}

\maketitle

\begin{abstract}
We study upper bounds on the length of $\mathbb F_q$-linear QMDS codes in the
folded Hamming distance relative to their other parameters, especially the field size $q$. Via a correspondence between such codes and
families of subspaces, we relate the length problem to that of upper bounding  $1$-subspace packings with respect  to the other parameters, especially   the  field size. Our
main result is a reduction from these families to partial spreads, which
allows us to import sharp bounds from finite geometry, including results of
Drake--Freeman, Năstase--Sissokho, and Honold--Kiermaier--Kurz. As a consequence,
we recover the Griesmer-type upper bound on the length of QMDS codes by Ball et al. and obtain tighter
upper bounds in several parameter regimes.

\textbf{Keywords:} QMDS codes, additive codes, folded Hamming distance,
subspace packings, partial spreads, vector space partitions, fractional MDS codes.
\end{abstract}

\section{Introduction}

Maximum distance separable (MDS) codes are among the central objects of coding
theory. See \cite{pless}.  They are the codes attaining the Singleton bound, and the classical
MDS conjecture predicts that, apart from a few exceptional cases, the length of
a nontrivial linear MDS code over $\mathbb F_q$ is at most $q+1$.  Thus, a
fundamental problem is to understand how long a code attaining a
Singleton-type bound can be in terms of the size of the underlying field. See \cite{ball2012sets}.

In this paper we study this problem for $\mathbb F_q$-linear codes in the
folded Hamming distance.  Such codes are linear subspaces of
$\mathbb F_q^{rn}$, but the coordinates are grouped into $n$ blocks of size
$r$, so that the weight counts the number of nonzero blocks.  This metric has appeared in connection with byte error correction \cite{etzion2020subspace}, array and low-density MDS codes \cite{blaum-lowest}, quantum codes \cite{ball2021additive}, and list-decodable constructions such as folded Reed--Solomon and multiplicity codes \cite{bhandari2023ideal}. Equivalently, one may view them as $\mathbb F_q$-linear codes in $\mathbb F_{q^r}^n$ with the classical Hamming distance.    In this
setting the Singleton bound takes the form
\[
d\leq n-\left\lceil \frac{k}{r}\right\rceil+1.
\]
Codes attaining this bound are called quasi-MDS, or QMDS \cite{martinez2025linear}; QMDS codes are also called fractional MDS codes in \cite{ball2024griesmer}.  When $r$ divides
$k$, QMDS codes are the same as MDS codes; the genuinely fractional case is
therefore the case $r\nmid k$. In this
sense, QMDS codes are the closest to MDS codes in the Hamming metric: after the classical MDS case, they have the largest possible minimum
distance relative to their dimension. Moreover, as shown in
\cite{martinez2025linear}, there exist QMDS codes whose lengths, relative to
the size of the underlying field, exceed those allowed for classical MDS codes.
This makes them a natural family to study from the point of view of length
bounds.

Our main goal in this paper is to obtain upper bounds on the maximal length of
QMDS codes relative to their other parameters, especially the field size $q$.  The strongest general bound known to us in this direction is due to Ball,
Lavrauw and Popatia \cite{ball2024griesmer}, who obtained a Griesmer-type
bound for fractional MDS codes (i.e., QMDS codes).  In our framework, their result states that if
a QMDS code has type $[n,r,k,d]$ with $k=er+r_0$ such that $e=\left\lceil\frac{k}{r}\right\rceil-1$ and $0<r_0<r$ (see Definitions \ref{folded} and \ref{def:QMDS}), then
\begin{equation}\label{eq:starting point}
n\leq e-1+q^r+\frac{q^r-1}{q^{r_0}-1}.
\end{equation}

Our approach is geometric.  To a generator matrix of a code in the folded
Hamming distance we associate a family of subspaces of $\mathbb F_q^k$, given
by the kernels of the column blocks.  The minimum-distance condition then
becomes an intersection condition on this family.  We show this in Section 2. In Section 3, we see that, for $e=1$ (i.e., $r < k \leq 2r$), such families are simply partial spreads, and we can directly apply known upper bounds on their sizes.

The main contribution of the paper is a reduction from the general case of bounding the maximal length of
QMDS codes to
bounds for partial spreads, this is Theorem \ref{thm:general_reduction_partial_spreads} in Section 4.  More precisely, if the relevant parameters satisfy
$k=er+r_0$ with $0<r_0<r$, then we prove that for every
$0\leq m\leq e-1$ one has
\[
    \binom{n-m}{e-m}\leq \mu_q(k-mr,r_0),
\]
where $\mu_q(v,t)$ denotes the maximum size of a partial $t$-spread in
$\mathbb F_q^v$.  This allows us to import sharp results from finite geometry.
Using only the packing bound for partial spreads, we recover the bound in  (\ref{eq:starting point}).  Using stronger bounds due to Drake--Freeman \cite{drake1979partial},
Năstase--Sissokho \cite{nastase2017maximum} and Honold--Kiermaier--Kurz \cite{honold2018partial}, we obtain in Section 5 improved upper bounds
for QMDS codes in several parameter regimes. See Theorems \ref{thm:best-drake-freeman-bound}, \ref{thm:nastase-me1} and \ref{thm:vector-space-partition}.

\section{Background and notation}

\subsection{Preliminaries}
Throughout the manuscript  $ \mathbb{F}_q $ denotes the finite field with $ q $ elements and we use the notation $[n]=\{1,2,\dots,n\}$. Codes in the folded Hamming distance are $\mathbb{F}_q$-linear subsets of $\mathbb{F}_q^{nr}$ but considered with the Hamming distance in $\left(\mathbb{F}_q^r\right)^n$. We give the basic definitions needed in this work, and refer the reader to \cite{martinez2025linear} for further background on codes in the folded Hamming distance.
\begin{definition}\label{folded distance}
For $\mathbf{c} = (\mathbf{c}_1,\ldots,\mathbf{c}_n) \in \mathbb{F}_{q}^{nr} $,  where $ \mathbf{c}_i \in \mathbb{F}_q^r $ for $ i \in [n] $, its folded Hamming weight is defined as $\w_F(\mathbf{c})=\lvert\{i\in[n]:\mathbf{c}_i\neq0\}\rvert$. We define the folded Hamming distance between $ \mathbf{c},\mathbf{d} \in \mathbb{F}_q^{nr} $ as $ \dd_F(\mathbf{c},\mathbf{d}) = \w_F(\mathbf{c}- \mathbf{d}) $. In general, a code is a subset $ \mathcal{C} \subseteq \mathbb{F}_q^{nr} $. We define the minimum folded Hamming distance of $ \mathcal{C} $ as $ \dd(\mathcal{C}) = \min \{ \dd_F(\mathbf{c},\mathbf{d}) : \mathbf{c},\mathbf{d} \in \mathcal{C}, \mathbf{c} \neq \mathbf{d} \} $. 
\end{definition}

If $ \mathcal{C} \subseteq \mathbb{F}_q^{rn} $ is $\mathbb{F}_q$-linear, then
\[
\dd(\mathcal{C}) = \min \{ \w_F(\mathbf{c}) : \mathbf{c} \in \mathcal{C} \setminus \{ 0 \} \}.
\]

\begin{definition} \label{folded} We say that $ \mathcal{C} $ is a code of type $ [n,r,k,d] $ when $ \mathcal{C} \subseteq \mathbb{F}_q^{nr} $ is $ \mathbb{F}_q $-linear, $ k $ is its dimension over $ \mathbb{F}_q $ and  $d = \dd(\mathcal{C})$. 
\end{definition}

\begin{definition}
Let $\mathcal{C}$ be a code  of type $[n,r,k,d]$. For a $ k \times (rn) $ matrix of full row rank $k$ $ G = (G_1|\ldots|G_n) $, where each $ G_i $ is of size $ k \times r $ and whose rows  span $\mathcal{C}$, we say that $G$ is a generator matrix of $\mathcal{C}$ and $ G_i $ is the $ i $-th column block of $ G $. 
\end{definition}
 It is convenient  for the simplicity of the results to regard $\mathbb{F}_q^{nr} $  as the ambient space because codes are $\mathbb{F}_q $-linear and consequently, we will define duality with the   inner product of $\mathbb{F}_q^{nr} $. However, we see that definition \ref{folded distance} naturally views $\mathbb{F}_{q}^r$  as the  alphabet and $n$ as the length.  In fact,  $ \mathbb{F}_q $-linear codes in $ \mathbb{F}_q^{rn} $ with the folded Hamming distance are the same as $ \mathbb{F}_q $-linear codes in $ \mathbb{F}_{q^r}^n $ with the classical Hamming distance, due to the following. 
\color{black}
Let $ \boldsymbol\beta = (\beta_1, \ldots, \beta_r) \in \mathbb{F}_{q^r}^r $ be an ordered basis of $ \mathbb{F}_{q^r} $ over $ \mathbb{F}_q $. Define the expansion map $ \varepsilon_{\boldsymbol\beta} : \mathbb{F}_{q^r} \longrightarrow \mathbb{F}_q^r $ by $ \varepsilon_{\boldsymbol\beta} (c_1 \beta_1 + \cdots + c_r \beta_r) = (c_1, \ldots, c_r) $, for $ c_1, \ldots, c_r \in \mathbb{F}_q $. If we extend it componentwise, it is obvious that $ \varepsilon_{\boldsymbol\beta} : \mathbb{F}_{q^r}^n \longrightarrow \mathbb{F}_q^{rn} $ is an $ \mathbb{F}_q $-linear isometry considering the classical Hamming distance in $ \mathbb{F}_{q^r}^n $ and the folded Hamming distance in $ \mathbb{F}_q^{rn} $.

The framework of additive fractional MDS codes considered in \cite{ball2024griesmer} is equivalent to the framework of $\mathbb{F}_q$-linear QMDS codes in the folded Hamming distance. Notice that additive codes with the classical Hamming distance in $ \mathbb{F}_q^n $, where $ q = p^r $ and $ p $ is prime, are thus equivalent to $ \mathbb{F}_p $-linear codes with the folded Hamming distance in $ \mathbb{F}_p^{rn} $.

\begin{definition}We define the inner product between $ \mathbf{c} , \mathbf{d} \in \mathbb{F}_q^{nr} $ as $ \mathbf{c} \cdot \mathbf{d} = c_1d_1 + \cdots + c_{rn} d_{rn} $, where $ \mathbf{c} = (c_1, \ldots, c_{rn}) $ and $ \mathbf{d} = (d_1, \ldots, d_{rn}) $. Given an $ \mathbb{F}_q $-linear code $\mathcal{C}\subseteq \mathbb{F}_q^{nr}$, we define its dual as $\mathcal{C}^\perp = \{ \mathbf{d} \in \mathbb{F}_q^{nr} : \mathbf{c} \cdot \mathbf{d} = 0, \textrm{ for all } \mathbf{c} \in \mathcal{C} \} $.
\end{definition}

\begin{proposition}[\textbf{\cite{martinez2025linear}}]\label{singleton}
Let $ \mathcal{C} $ be a code of type $ [n,r,k,d] $ and let $ \mathcal{C}^\perp $ be its dual, of type $ [n,r,rn-k,d^\perp] $. Then 
\begin{enumerate}
\item \label{one} $k\leq r(n - d +1)$ 
\item \label{two} $d\leq n-\lceil \frac{k}{r} \rceil+1$.
\end{enumerate}
\end{proposition}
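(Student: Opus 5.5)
The plan is to prove item \ref{one} with a puncturing (projection) argument, and then obtain item \ref{two} as a purely arithmetic consequence of it. The statement about $\mathcal{C}^\perp$ having type $[n,r,rn-k,d^\perp]$ only uses that the dual of a $k$-dimensional subspace of $\mathbb{F}_q^{rn}$ has dimension $rn-k$. Since $\mathcal{C}^\perp$ is $\mathbb{F}_q$-linear, the rest of the statement applies to it verbatim.

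For item \ref{one}, consider the map $\pi:\mathcal{C}\to\mathbb{F}_q^{r(n-d+1)}$ that deletes the first $d-1$ blocks and keeps the blocks $\mathbf{c}_d,\ldots,\mathbf{c}_n$. This map is $\mathbb{F}_q$-linear. We will show it is injective. Suppose $\mathbf{c}\in\mathcal{C}$ and $\pi(\mathbf{c})=0$. Then all nonzero blocks of $\mathbf{c}$ lie among the first $d-1$ blocks, so $\w_F(\mathbf{c})\leq d-1<d$. Since $d$ is the minimum weight of a nonzero codeword, this forces $\mathbf{c}=0$. Injectivity then gives $k=\dim\mathcal{C}\leq r(n-d+1)$. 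If $d-1>n$ there is nothing to prove, and indeed this cannot happen because weights are at most $n$. When $\mathcal{C}=0$ we use the convention that $k=0$ satisfies the bound trivially.

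For item \ref{two}, divide the inequality of item \ref{one} by $r$ to get $k/r\leq n-d+1$. The right-hand side is an integer, so we may take ceilings: $\lceil k/r\rceil\leq n-d+1$. Rearranging gives $d\leq n-\lceil k/r\rceil+1$.

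None of these steps is a real obstacle. The only point that needs care is the integrality step in item \ref{two}: it is what produces the ceiling, and hence the fractional Singleton bound, rather than a bound involving $k/r$ itself.
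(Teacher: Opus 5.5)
Your proof is correct: deleting the first $d-1$ blocks gives an injective $\mathbb{F}_q$-linear map $\mathcal{C}\to\mathbb{F}_q^{r(n-d+1)}$, which yields item 1, and item 2 follows from item 1 because $n-d+1$ is an integer, so $k/r\leq n-d+1$ implies $\lceil k/r\rceil\leq n-d+1$. The paper gives no proof of its own here (it cites \cite{martinez2025linear}), and your puncturing argument is the standard proof of this Singleton-type bound, so there is nothing to compare or add.
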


As usual, the code is MDS if it attains the bound in Item 1. If $ r \mid k $ (necessary for the code to be MDS), then Items 1 and 2 coincide. However, when $ r \nmid k $, the second bound may be attained but the first one cannot. This motivates the following definition, which slightly extends the usual MDS terminology, since we measure dimension over the subfield $\mathbb F_q$ rather than over the alphabet.

\begin{definition}[\textbf{\cite{martinez2025linear}}]\label{def:QMDS}
We say that a code of type $[n,r,k,d]$ is quasi-MDS or QMDS if $d=n-\lceil \frac{k}{r} \rceil+1$. A linear code is dually QMDS if both itself and its dual are QMDS. A  QMDS code such that $ r \mid k $ is an MDS code.
\end{definition}
In \cite{martinez2025linear} it is shown that the dual of a QMDS code  is not necessarily also QMDS, which motivates the dually QMDS definition.

\subsection{Families of subspaces and relation with geometric literature}

We now introduce the main geometric object of this work.

\begin{definition}\label{family of subspaces}
Let $\mathcal{U} = ({U}_i)_{i=1}^n$ be an ordered family of $\mathbb{F}_q$-linear subspaces of $\mathbb{F}_q^k$. We say that $\mathcal{U}$ is of type $[n,s,k,e]$ if:
\begin{enumerate}
    \item $U_i \subseteq \mathbb{F}_q^k$ for all $i \in [n]$,
    \item $s=\displaystyle\min_{i\in[n]}\dim(U_i)$,
    \item $e<n$ is the minimum number such that for every subset $I \subseteq [n]$ with $|I| = e+1$, one has
    \[
    \bigcap_{i \in I} U_i = \{0\}.
    \]
\end{enumerate}
If $\dim(U_i) = s$ for all $i \in [n]$, we say that the family is \emph{faithful}. In particular, since $ e < n $, then $ \bigcap_{i=1}^n U_i = \{ 0 \}$ (by taking $ I = [n] $).
\end{definition}

Families of type $[n,s,k,e]$ are closely related to the theory of subspace packings: a $t$-$(v,k,\lambda)_q$ subspace packing is a collection of $k$-dimensional subspaces of $\mathbb{F}_q^v$ such that every $t$-dimensional subspace is contained in at most $\lambda$ of them (see \cite{etzion2020subspace}).

In particular, faithful families of type $[n,s,k,e]$ correspond to the case $t=1$, $v=k$, $k=s$ and $\lambda=e$. Indeed, condition (3) of Definition \ref{family of subspaces} is equivalent to requiring that every $1$-dimensional subspace of $\mathbb{F}_q^k$ is contained in at most $e$ subspaces of the family.

Therefore, the maximum possible size $n$ of a faithful family of type $[n,s,k,e]$ coincides with the parameter $A_q^r(k,s,1;e)$ in the notation of subspace packings, the maximum possible size of a $1$-$(k,s,e)_q$ subspace packing in which repeated subspaces are allowed. $A_q(k,s,1;e)$ denotes the maximum possible size of a $1$-$(k,s,e)_q$ subspace packing without repeated subspaces. However we will not use it, as in our setting repetitions are permitted.

\medskip

The case $e=1$ recovers the classical notion of partial spreads, which has been extensively studied in finite geometry (see \cite{beutelspacher}).

\medskip

The restriction to the case $t=1$ is natural in our context, as it is precisely the one that arises from the correspondence with linear codes in the folded Hamming distance.

\subsection{Relation between subspace families and codes in the folded Hamming distance}

We will now define a correspondence between families of subspaces and linear codes in the folded Hamming distance. A similar connection is made in \cite{bartoli}.

\begin{definition}\label{def: correspondence}
Let $\mathcal{C}$ be a code  of type $[n,r,k,d]$ and a generator matrix  $G=(G_1|\ldots|G_n)$. We define the family of subspaces $ \mathcal{U}_G = ( U_i )_{i=1}^n $ by defining each $ U_i\subseteq \mathbb{F}_q^k$ as:
$$U_i=\left\{\mathbf{u}\in \mathbb{F}_q^k: \mathbf{u}G_i=\boldsymbol{0}\right\}.$$

Conversely, let $\mathcal U=(\ U_i)_{i=1}^n$ be a family of subspaces of type $[n,s,k,e]$.  Let   $G_1,\dots,G_n$ be matrices of the same  
 size $k\times (k- s)$ such that $U_i=\{\boldsymbol{u} \in\mathbb{F}_q^k:\boldsymbol{u} G_i = 0\}$. We say that the matrix $ G = (G_1|\ldots|G_n) $ is a parity--check matrix of the family $\mathcal{U}$ and we define the  code $\mathcal{C}_G$ as the one with generator matrix $G$.
\end{definition}

\begin{remark}\label{maximum rank k}
    We have that $G$ has full row  rank $k$:  
    since $e<n$, then we see that $\bigcap_{i=1}^nU_i=\{\boldsymbol{0}\}$. Hence, there does not exist a non--zero $\boldsymbol{u}\in\mathbb{F}_q^k$ such that  $\boldsymbol{u}G = (\boldsymbol{u}G_1|\ldots|\boldsymbol{u}G_n)=\boldsymbol{0}$, in other words, the rows of $G$ are linearly independent. 
\end{remark}
In general, the associated family $\mathcal U_G$ need not be faithful or even have minimum dimension $k-r$: for a
block $G_i$ of the generator matrix one only has
\[
\dim(U_i)=k-\operatorname{rank}(G_i)\ge k-r.
\]
Thus the expected value $\dim(U_i)=k-r$ for all $i\in[n]$ is equivalent to
requiring every block $G_i$ to have full rank $r$. We have the following exact correspondence between parameters. The following result was given in \cite[Th. 3.8]{bartoli} in the language of subspace packings. We give a proof for convenience of the reader.

 \begin{proposition} \label{tipos}
\begin{enumerate}
    \item If  $\mathcal{C}$ is a non--zero code of type $[n,r,k,d]$  with generator matrix $G=\big(G_1|\cdots|G_n\big)$, then  $\mathcal{U}_G$  is a family of subspaces of type $[n,k-\rho,k,n-d]$, with $\rho=\max_{i\in[n]}\mathrm{rank}(G_i)\leq r$.   
    \item If $\mathcal{U}=( U_i )_{i=1}^n$ is a family of subspaces of type $[n,s,k,e]$ and $G$ is one of its parity-check matrices, then $\mathcal{C}_G$ is a non--zero linear code of type $[n,k-s,k,n-e]$. 
\end{enumerate}
\end{proposition}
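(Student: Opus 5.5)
The key identity behind both parts is that, for $\mathbf{u}\in\mathbb{F}_q^k$ and the codeword $\mathbf{c}=\mathbf{u}G=(\mathbf{u}G_1|\cdots|\mathbf{u}G_n)$, the $i$-th block $\mathbf{c}_i$ vanishes exactly when $\mathbf{u}\in U_i$. Therefore
\[
\w_F(\mathbf{u}G)=n-\lvert\{i\in[n]:\mathbf{u}\in U_i\}\rvert .
\]
This translates the minimum distance of the code into the largest number of subspaces of the family that share a common nonzero vector.

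For part 1, take $G$ a generator matrix of $\mathcal{C}$. It has full row rank $k$, so $\mathbf{u}\mapsto\mathbf{u}G$ is a bijection from $\mathbb{F}_q^k$ onto $\mathcal{C}$, and nonzero $\mathbf{u}$ correspond to nonzero codewords. The dimension parameter follows directly: $\dim U_i=k-\operatorname{rank}(G_i)$, so $\min_i\dim U_i=k-\rho$, and $\rho\le r$ because $G_i$ has $r$ columns.

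For the intersection parameter, the identity gives
\[
\max_{\mathbf{u}\neq 0}\lvert\{i:\mathbf{u}\in U_i\}\rvert=n-d .
\]
So every $(n-d+1)$-subset $I$ has $\bigcap_{i\in I}U_i=\{0\}$. On the other hand, some nonzero $\mathbf{u}$ lies in $n-d$ of the subspaces, so some $(n-d)$-subset has nonzero intersection. Hence $n-d$ is exactly the minimal $e$ in Definition \ref{family of subspaces}. The condition $e<n$ amounts to $d\ge 1$, which holds because $\mathcal{C}\neq 0$ forces $k\ge1$ and $d$ is a minimum weight over nonzero codewords.

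One subtlety needs checking: the definition takes the minimum $e$ such that all $(e+1)$-subsets intersect trivially. I would note that this property is monotone in $e$, since a larger index set gives a smaller intersection. So the minimum is characterized exactly by the maximum above, and the degenerate case $n-d=0$ also works.

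For part 2, let $\mathcal{U}$ be of type $[n,s,k,e]$ with parity-check matrix $G$. Remark \ref{maximum rank k} gives full row rank $k$, so $\mathcal{C}_G\subseteq\mathbb{F}_q^{n(k-s)}$ has dimension $k\ge1$ and is nonzero; here the block size is $k-s$. Running the identity in reverse, minimality of $e$ means some $e$ subspaces share a nonzero $\mathbf{u}$, while no nonzero vector lies in $e+1$ of them. Hence the minimum weight is exactly $n-e$.

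The only step I expect to need care is the exact equality of $e$ and $n-d$, since it requires both directions: no nonzero vector lies in $n-d+1$ subspaces, and some nonzero vector attains $n-d$. I would handle it with the monotonicity remark above.
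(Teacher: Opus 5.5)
Your proposal is correct and takes essentially the same approach as the paper: both identify the zero blocks of $\mathbf{u}G$ with the memberships $\mathbf{u}\in U_i$, read off $\dim U_i=k-\operatorname{rank}(G_i)$, and use full row rank (Remark~\ref{maximum rank k} in part 2) to match $n-d$ with the largest number of subspaces sharing a nonzero vector. Your explicit weight identity and monotonicity remark make precise what the paper compresses into ``(or more)'', including both directions of the equality $e=n-d$.
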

\begin{proof}
    For item 1, by definition $\mathcal{U}_G$ is formed by $n$ subspaces with ambient space $\mathbb{F}_q^k$. We see that $k-\rho$ is the minimum dimension of the family $\mathcal{U}_G$ because $\rho$ is the maximum rank of each block of $r$ columns of $G$. Finally, there is no nonzero codeword from $\mathcal{C}$  with $n-d+1$ zero blocks if, and only if, the intersection of $(n-d)+1$ subspaces (or more) is  zero. We note that $n-d<n$ because $\mathcal{C}\neq\{\boldsymbol{0}\}$ implies that $d>0$.
    
    Item 2 is proven similarly, by definition $G$ is of size $k\times n(k-s)$, thus $ \mathcal{C}_G \subseteq \mathbb{F}_q^{(k-s)n} $. Remark \ref{maximum rank k} shows that $e<n$ implies that $\dim(\mathcal{C}_G)=k$. Finally, the intersection of $e+1$ subspaces (or more) is  zero if, and only if, $e$ is the maximum number of zero blocks that a nonzero codeword can have, thus the distance of the code is $n-e$.  We note that $e<n$ implies that  $\mathcal{C}\neq\{\boldsymbol{0}\}$.
\end{proof}

For the purpose of deriving upper bounds on the length of a code $\mathcal{C}$, one may replace the family $\mathcal{U}_G$ associated with one of the generator matrices $G$  by a
faithful family of the expected dimension without changing the intersection
parameter.

\begin{corollary}\label{cor:faithful-truncation}
Let $\mathcal C$ be a non--zero code of type $[n,r,k,d]$ with $k>r$. Then there
exists a faithful family of subspaces of type
\[
[n,k-r,k,n-d].
\]
More precisely, for any generator matrix $G$ of $\mathcal C$, this family may
be chosen by taking subspaces inside the members of the
family  $\mathcal{U}_G$.
\end{corollary}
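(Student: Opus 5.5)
Fix a generator matrix $G=(G_1|\ldots|G_n)$ of $\mathcal C$ and consider $\mathcal U_G=(U_i)_{i=1}^n$. By Proposition \ref{tipos}(1), $\mathcal U_G$ has type $[n,k-\rho,k,n-d]$ with $\rho=\max_i\operatorname{rank}(G_i)\le r$. In particular every $U_i$ has dimension $k-\operatorname{rank}(G_i)\ge k-r$, and this is positive because $k>r$. The plan is to shrink each $U_i$ to a subspace $V_i\subseteq U_i$ with $\dim V_i=k-r$ exactly. Concretely, we choose any $(k-r)$-dimensional subspace of $U_i$, which exists since $\dim U_i\ge k-r\ge 1$. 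Then we check that $\mathcal V=(V_i)_{i=1}^n$ is faithful of type $[n,k-r,k,n-d]$.

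Faithfulness and the parameter $s=k-r$ hold by construction. Set $e=n-d$. The intersection parameter needs two things: every $(e+1)$-fold intersection of the $V_i$ is zero, and $e$ is the \emph{minimum} such number, i.e.\ some $e$-fold intersection is nonzero. The first property is inherited directly: for $|I|=e+1$ we have $\bigcap_{i\in I}V_i\subseteq\bigcap_{i\in I}U_i=\{0\}$. Also $e<n$ still holds because $d\ge1$.

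The main obstacle is minimality, since shrinking the subspaces could destroy a nonzero $e$-fold intersection. We handle it through the choice of the $V_i$. Since $\mathcal U_G$ has parameter exactly $e$, there is a set $J$ with $|J|=e$ and a nonzero vector $\mathbf u\in\bigcap_{i\in J}U_i$. Equivalently, a minimum-weight codeword $\mathbf uG$ vanishes on the blocks in $J$. We then choose each $V_i$ with $i\in J$ to contain $\mathbf u$, which is possible because $1\le k-r\le\dim U_i$. The remaining $V_i$ are arbitrary of dimension $k-r$. Then $\mathbf u\in\bigcap_{i\in J}V_i$, so the minimum is exactly $e$.

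When $e=0$ (that is, $d=n$) the minimality condition is vacuous and any choice works. If one prefers, the result can also be phrased via Proposition \ref{tipos}(2): the parity-check code of $\mathcal V$ has type $[n,r,k,d]$ with all blocks of full rank. This is not needed for the statement.
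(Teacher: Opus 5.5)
Your proof is correct and follows the paper's argument essentially verbatim: shrink each $U_i$ to a $(k-r)$-dimensional $V_i$, get the vanishing of $(n-d+1)$-fold intersections by inclusion, and keep minimality by making the $V_i$ indexed by a witness set of size $n-d$ contain a common nonzero vector $\mathbf u$ (the paper also takes this witness from a minimum-weight codeword). Your aside on $e=0$ is harmless but vacuous, since $k>r$ forces every $U_i\neq\{0\}$ and hence $d\le n-1$.
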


\begin{proof}
Let $G=(G_1|\ldots|G_n)$ be a generator matrix of $\mathcal C$, and let
$\mathcal U_G=(U_i)_{i=1}^n$ be the associated family. For every $i\in[n]$ we have
\[
\dim U_i=k-\operatorname{rank}(G_i)\ge k-r.
\]

Since $\mathcal C$ has minimum folded distance $d$, there is a
nonzero codeword of weight $d$. Equivalently, there exist a subset
$I\subseteq[n]$ with $|I|=n-d$ and a nonzero vector
$\boldsymbol u\in\mathbb F_q^k$ such that
\[
\boldsymbol u\in \bigcap_{i\in I}U_i.
\]
For each $i\in I$, choose a subspace $V_i\subseteq U_i$ of dimension $k-r$ containing $\boldsymbol u$. This is possible because
$\dim U_i\ge k-r$ and $k-r>0$. For each $i\notin I$, choose an arbitrary
subspace $V_i\subseteq U_i$ of dimension $k-r$.

We claim that $\mathcal V=(V_i)_{i=1}^n$ is a faithful family of type
$[n,k-r,k,n-d]$. By construction, all subspaces $V_i$ have dimension $k-r$, so
the family is faithful. Moreover, for every subset $J\subseteq[n]$ with
$|J|=n-d+1$, one has
\[
\bigcap_{j\in J}V_j
\subseteq
\bigcap_{j\in J}U_j
=
\{0\},
\]
because otherwise $\mathcal C$ would contain a nonzero codeword with at least
$n-d+1$ zero blocks, contradicting the definition of $d$.

On the other hand,
\[
\boldsymbol u\in \bigcap_{i\in I}V_i,
\]
so an intersection of $n-d$ members of $\mathcal V$ is nonzero. Hence $n-d$ is
exactly the minimum integer such that all intersections of $n-d+1$ members are
zero. 
\end{proof}
Corollary \ref{cor:faithful-truncation} is sufficient for upper bounds. However, if one wants the family $\mathcal{U}_G$ to be  faithful and thus $\dim(U_i)=k-r$ for all $i\in[n]$, the following intrinsic condition on the code is natural.

\begin{definition}
   Let $\mathcal C$ be a code of type $[n,r,k,d]$. We say that $\mathcal C$ is
   faithful if $\dd(\mathcal C^\perp)>1$.
\end{definition}

\begin{remark}
    This notion is equivalent to faithful fractional codes in \cite{ball2024griesmer}
\end{remark}
 For faithful codes and faithful subspace families, there is an exact correspondence in the parameters.
\begin{theorem} \label{tipos homo}
\begin{enumerate}
    \item If  $\mathcal{C}$ is a non--zero faithful code of type $[n,r,k,d]$ with generator matrix $G$, then $\mathcal{U}_G$ is a faithful family of subspaces  of type $[n,k-r,k,n-d]$. 
    \item If $\mathcal{U}=( U_i )_{i=1}^n$ is a faithful family of subspaces of type $[n,s,k,e]$ and $G$ is one of its parity-check matrices, then $\mathcal{C}_G$ is a non--zero faithful linear code of type $[n,k-s,k,n-e]$. 
\end{enumerate}    
\end{theorem}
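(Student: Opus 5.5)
The plan is to reduce both items to Proposition \ref{tipos}. The only new ingredient is to show that faithfulness of the code, $\dd(\mathcal C^\perp)>1$, is equivalent to every column block $G_i$ having full rank $r$. That in turn is equivalent to $\dim U_i=k-r$ for all $i$.

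\emph{The key lemma.} For a code $\mathcal C$ of type $[n,r,k,d]$ with generator matrix $G=(G_1|\ldots|G_n)$, we have $\dd(\mathcal C^\perp)>1$ if and only if $\operatorname{rank}(G_i)=r$ for every $i\in[n]$. To see this, note that a nonzero vector of $\mathcal C^\perp$ with folded weight $1$ supported on block $i$ has the form $(\boldsymbol 0,\ldots,\mathbf{x},\ldots,\boldsymbol 0)$ with $\mathbf{x}\in\mathbb F_q^r\setminus\{0\}$. It lies in $\mathcal C^\perp$ exactly when every row of $G$ is orthogonal to it, i.e. when $G_i\mathbf{x}^T=\boldsymbol 0$. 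Such an $\mathbf x$ exists precisely when the $r$ columns of $G_i$ are linearly dependent, i.e. $\operatorname{rank}(G_i)<r$. One degenerate point needs checking: if $\mathcal C^\perp=\{0\}$, then $\dd$ is vacuous, or infinite by convention, and the blocks must have full rank anyway since $k=rn$. This should be consistent with the convention, and I will note it.

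\emph{Item 1.} Let $\mathcal C$ be faithful. By the lemma, $\rho=\max_i\operatorname{rank}(G_i)=r$, and moreover every block has rank $r$. So $\dim U_i=k-\operatorname{rank}(G_i)=k-r$ for all $i$, which means $\mathcal U_G$ is faithful. Proposition \ref{tipos}(1) then gives the type $[n,k-r,k,n-d]$.

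\emph{Item 2.} Let $\mathcal U$ be faithful of type $[n,s,k,e]$, and let $G$ be a parity-check matrix with blocks $G_i$ of size $k\times(k-s)$ whose left kernel is $U_i$. Then $\operatorname{rank}(G_i)=k-\dim U_i=k-s$, which is the full number of columns. Proposition \ref{tipos}(2) gives that $\mathcal C_G$ is a nonzero code of type $[n,k-s,k,n-e]$ with generator matrix $G$, where full row rank comes from Remark \ref{maximum rank k}. Applying the lemma with $r=k-s$ shows $\dd(\mathcal C_G^\perp)>1$, so $\mathcal C_G$ is faithful.

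I expect no real obstacle here. The only delicate step is the lemma's translation between a weight-one dual codeword and a column dependency inside a block. That step needs care with transposes and row-versus-column conventions, and with the trivial-dual edge case.
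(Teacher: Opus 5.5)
Your proof is correct and follows the same argument as the paper. The paper likewise observes that a folded-weight-one word of $\mathcal{C}^\perp$ is exactly a linear dependency among the columns of a single block $G_i$, so $\dd(\mathcal{C}^\perp)>1$ is equivalent to every block having full column rank, i.e.\ $\dim U_i=k-r$ for all $i$; your explicit appeal to Proposition~\ref{tipos} for the remaining parameters and your check of the edge case $\mathcal{C}^\perp=\{0\}$ spell out what the paper leaves implicit.
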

\begin{proof}
   We only prove item 1, as the proof of item 2 is analogous. For a generator matrix $G$, each word of $\mathcal{C}^{\perp}$ is a zero linear combination between the columns of $G$. Thus, the condition $\dd_F(\mathcal{C}^{\perp})>1$ is equivalent to each block $G_i$ having linearly independent columns. In other words, each $G_i$ has maximum rank $r=k-s$ or, equivalently, each subspace $U_i$ has dimension $s=k-r$.
\end{proof}

However, the correspondence in Definition \ref{def: correspondence} links a code with multiple families of subspaces, one for each generator matrix. In order to obtain a bijection we need to consider equivalences of families of subspaces.

\begin{definition} \label{uequiv}
 Let $ \mathcal{U} = (U_i)_{i=1}^n $ and $ \mathcal{U}^\prime = (U^\prime_i )_{i=1}^n $ be both  families of subspaces of type $[n,s,k,e]$. We say they are equivalent if there exist  a vector space isomorphism $ \varphi : \mathbb{F}_q^k \longrightarrow \mathbb{F}_q^k $ and a permutation $ \sigma : [n] \longrightarrow [n] $  such that $ \mathcal{U}^\prime_{i} = \varphi\left(\mathcal{U}_{\sigma(i)}\right) $, for all $ i \in [n] $. 
\end{definition}


\begin{definition} \label{cequiv}
We say that two $\mathbb{F}_q $-linear codes $ \mathcal{C},\mathcal{C}^\prime \subseteq \mathbb{F}_q^{rn} $ are equivalent if there exists  $ \phi : \mathbb{F}_q^{rn} \longrightarrow \mathbb{F}_q^{rn} $ with $ \mathcal{C}^\prime = \phi(\mathcal{C}) $ and such that $\phi$ is an $ \mathbb{F}_q $-linear isometry for the folded Hamming distance, that is, $ \w_F(\phi(\mathbf{c})) = \w_F(\mathbf{c}) $, for all $ \mathbf{c} \in \mathbb{F}_q^{rn} $.
\end{definition}
The following result is \cite[Th. 2]{sr-hamming} and is an explicit characterization of linear isometries in the folded Hamming distance.
\begin{proposition} \label{prop isometries}
Let $ \phi : \mathbb{F}_q^{rn} \longrightarrow \mathbb{F}_q^{rn} $ be an $ \mathbb{F}_q $-linear vector space isomorphism. Then $\phi$ is an isometry for the folded Hamming distance if and only if there exist invertible matrices $ A_1, \ldots, A_n \in {\rm GL}_r(\mathbb{F}_q) $ and a permutation $ \sigma : [n] \longrightarrow [n] $ such that, for all $ \mathbf{c}_1, \ldots, \mathbf{c}_n \in \mathbb{F}_q^{r} $,
$$ \phi(\mathbf{c}_1, \ldots, \mathbf{c}_n) = \left( \mathbf{c}_{\sigma(1)} A_1, \ldots, \mathbf{c}_{\sigma(n)}A_n \right). $$
\end{proposition}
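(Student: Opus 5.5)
We must prove Proposition \ref{prop isometries}. The reverse direction is immediate. If $\phi(\mathbf c_1,\dots,\mathbf c_n)=(\mathbf c_{\sigma(1)}A_1,\dots,\mathbf c_{\sigma(n)}A_n)$ with each $A_i$ invertible, then $\mathbf c_{\sigma(i)}A_i=0$ if and only if $\mathbf c_{\sigma(i)}=0$. Since $\sigma$ is a bijection, the number of nonzero blocks is preserved, so $\w_F(\phi(\mathbf c))=\w_F(\mathbf c)$.

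For the forward direction, let $\phi$ be a linear isometry. For $j\in[n]$ define $B_j=\{\mathbf c\in\mathbb F_q^{rn}:\mathbf c_i=0 \text{ for } i\neq j\}$; this is the $j$-th block subspace, of dimension $r$.

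\textbf{Step 1: each block maps into a single block.} Every nonzero vector of $B_j$ has weight $1$, so $\phi(B_j)$ is an $r$-dimensional subspace all of whose nonzero vectors have weight $1$. Suppose two nonzero vectors $\mathbf x,\mathbf y\in\phi(B_j)$ were supported on different blocks $a\neq b$. Then $\mathbf x+\mathbf y$ is nonzero in both blocks $a$ and $b$, so it has weight $2$, a contradiction. Hence all of $\phi(B_j)$ is supported on one block $\tau(j)$, i.e. $\phi(B_j)\subseteq B_{\tau(j)}$. Since both spaces have dimension $r$ and $\phi$ is injective, $\phi(B_j)=B_{\tau(j)}$.

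\textbf{Step 2: $\tau$ is a permutation.} If $\tau(j)=\tau(j')$ with $j\neq j'$, then $\phi(B_j+B_{j'})$ would lie inside the single $r$-dimensional space $B_{\tau(j)}$. This contradicts injectivity, because $B_j+B_{j'}$ has dimension $2r$. So $\tau$ is injective on the finite set $[n]$, hence a bijection.

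\textbf{Step 3: reading off the matrices.} By linearity, $\phi(\mathbf c)=\sum_j\phi(\mathbf c_j \text{ placed in block } j)$. The $j$-th summand lies in block $\tau(j)$ and equals $\mathbf c_jM_j$ for some $M_j\in{\rm GL}_r(\mathbb F_q)$; $M_j$ is invertible because $\phi$ restricts to an isomorphism $B_j\to B_{\tau(j)}$. Set $\sigma=\tau^{-1}$ and $A_i=M_{\sigma(i)}$. Then the $i$-th block of $\phi(\mathbf c)$ is $\mathbf c_{\sigma(i)}A_i$, which is the claimed form.

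The only substantive step is Step 1, the argument that a subspace in which every nonzero vector has weight $1$ is confined to a single block. The rest is bookkeeping. Note that only the preservation of weight $1$ vectors, together with the bijectivity of $\phi$, is actually used.
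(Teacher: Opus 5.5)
\textbf{Verdict: correct, by a different route (the paper gives no proof to follow).} Your proof is correct and complete. The paper does not prove this proposition itself; it states it as \cite[Th.~2]{sr-hamming} and uses it.

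Your argument is self-contained and elementary.

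\begin{itemize}
\item The reverse direction is immediate, as you say.
\item The key point is your Step~1. Take a subspace in which every nonzero vector has folded weight $1$. If two of its vectors were supported on different blocks, their sum would be a vector of the same subspace with weight $2$. So the whole subspace lies in a single block.
\item Steps~2 and~3 are correct: each block space $B_j$ is mapped onto some $B_{\tau(j)}$, and $\tau$ is injective by a dimension count. The matrices $A_i=M_{\tau^{-1}(i)}$ are then read off from the induced isomorphisms $B_{\tau^{-1}(i)}\to B_i$.
\end{itemize}

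Citing the result keeps the paper short and ties it to existing work. Your route gives a short, checkable proof with no outside machinery.

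Your route also gives slightly more, as you note: the forward direction only needs that $\phi$ is injective and sends weight-$1$ vectors to weight-$1$ vectors. Any such linear map already has the block-monomial form, and therefore preserves all folded weights.
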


 We now show that the correspondence in Definition \ref{def: correspondence} yields a bijection between equivalence classes of codes in the folded Hamming distance (Definition \ref{folded}) and families of subspaces (Definition \ref{family of subspaces}).
 
\begin{theorem} 
\begin{enumerate}
    \item Let $G$ and $G'$ be, respectively,  generator matrices of equivalent codes $\mathcal{C}$ and $\mathcal{C}'$ both of type $[n,r,k,d]$. Then $\mathcal{U}_G$ and $\mathcal{U}_{G'}$ are equivalent.
    \item Let $G$ and $G'$ be, respectively,  parity-check matrices of $ \mathcal{U} = (U_i)_{i=1}^n $ and $\mathcal{U}^\prime = (U^\prime_i )_{i=1}^n $. If $ \mathcal{U}$ and $\mathcal{U}'$  are equivalent families of type
$[n,s,k,e]$, then $\mathcal{C}_G$ and $\mathcal{C}_{G'}$ are equivalent. 
\end{enumerate}    
\end{theorem}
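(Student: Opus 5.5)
For item 1, the first step is to use Proposition \ref{prop isometries} to describe the equivalence concretely. Since $\mathcal{C}'=\phi(\mathcal{C})$ with $\phi(\mathbf{c}_1,\ldots,\mathbf{c}_n)=(\mathbf{c}_{\sigma(1)}A_1,\ldots,\mathbf{c}_{\sigma(n)}A_n)$, the matrix $H=(G_{\sigma(1)}A_1|\cdots|G_{\sigma(n)}A_n)$ has rows spanning $\mathcal{C}'$. It is therefore a generator matrix of $\mathcal{C}'$. Any two generator matrices of the same code of dimension $k$ differ by left multiplication by an invertible matrix, so $G'=MH$ for some $M\in{\rm GL}_k(\mathbb{F}_q)$. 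Hence $G'_i=MG_{\sigma(i)}A_i$ for every $i$.

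Next I compute the kernels. For $\mathbf{u}\in\mathbb{F}_q^k$, since $A_i$ is invertible,
\[
\mathbf{u}G'_i=\mathbf{0}\iff (\mathbf{u}M)G_{\sigma(i)}=\mathbf{0}.
\]
Thus $U'_i=\{\mathbf{u}:\mathbf{u}M\in U_{\sigma(i)}\}=\varphi(U_{\sigma(i)})$, where $\varphi(\mathbf{v})=\mathbf{v}M^{-1}$ is a vector space isomorphism of $\mathbb{F}_q^k$. This is exactly the equivalence of Definition \ref{uequiv}. Both families have the same type by Proposition \ref{tipos}, or simply because $\varphi$ preserves dimensions and intersections.

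For item 2, suppose $U'_i=\varphi(U_{\sigma(i)})$ with $\varphi(\mathbf{v})=\mathbf{v}P$ for some $P\in{\rm GL}_k(\mathbb{F}_q)$. Then $\mathbf{u}\in U'_i$ if and only if $\mathbf{u}P^{-1}\in U_{\sigma(i)}$, that is, if and only if $\mathbf{u}P^{-1}G_{\sigma(i)}=\mathbf{0}$. So the $k\times(k-s)$ matrices $G'_i$ and $P^{-1}G_{\sigma(i)}$ have the same left kernel $U'_i$. Two matrices with the same left kernel have the same column space, namely the annihilator of $U'_i$.

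This step is the main point of care, since parity-check matrices are not unique. I must show that $G'_i=P^{-1}G_{\sigma(i)}A_i$ for some $A_i\in{\rm GL}_{k-s}(\mathbb{F}_q)$. When the family is faithful, both matrices have full column rank $k-s$ and the same column space, so such an $A_i$ exists.

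In general the blocks may be rank deficient, with $\dim U'_i>s$. Then the equal column spaces give only two-sided relations $G'_i=P^{-1}G_{\sigma(i)}B_i$ and $P^{-1}G_{\sigma(i)}=G'_iC_i$. Instead of constructing an invertible $A_i$ explicitly, I would argue directly on codewords. Define $\psi_i:\mathbf{x}P^{-1}G_{\sigma(i)}\mapsto \mathbf{x}G'_i$, which is a well-defined bijective linear map between the column spaces, because both maps $\mathbf{x}\mapsto\mathbf{x}P^{-1}G_{\sigma(i)}$ and $\mathbf{x}\mapsto\mathbf{x}G'_i$ have the same kernel $U'_i$. Extend $\psi_i$ to an invertible linear map $A_i$ of $\mathbb{F}_q^{k-s}$ by choosing a complement of the column space in $\mathbb{F}_q^{k-s}$. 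Then $G'_i=P^{-1}G_{\sigma(i)}A_i$ holds.

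Finally, set $\phi(\mathbf{c}_1,\ldots,\mathbf{c}_n)=(\mathbf{c}_{\sigma(1)}A_1,\ldots,\mathbf{c}_{\sigma(n)}A_n)$. This is an isometry by Proposition \ref{prop isometries}, and it maps the row space of $G$ onto the row space of $P^{-1}\cdot(\text{the permuted }G)\cdot{\rm diag}(A_i)=G'$, since left multiplication by $P^{-1}$ does not change the row space. Therefore $\phi(\mathcal{C}_G)=\mathcal{C}_{G'}$, so the two codes are equivalent.
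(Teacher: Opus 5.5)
Your proof is correct and follows the paper's route. Item 1 is exactly the paper's argument ($G'=BGP_\sigma A$ and $\varphi(\mathbf{u})=\mathbf{u}B^{-1}$). For Item 2, which the paper only says is ``proved similarly'', you supply the step it omits: two parity-check blocks with the same left kernel differ by an invertible $A_i$ on the right, even when the blocks are rank deficient.

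One slip in terminology: the spaces between which you define $\psi_i$ are the row spaces of the blocks (the images of $\mathbf{x}\mapsto\mathbf{x}N$ in $\mathbb{F}_q^{k-s}$), not their column spaces. These two images have equal dimension but may differ, so the extension to $A_i$ should map a complement of one onto a complement of the other, rather than using a single complement.
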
 

\begin{proof} Item 2 is proved similarly, thus we only prove Item 1. Let $ G = (G_1|\ldots|G_n) $ and $ G' = (G'_1|\ldots|G'_n) $. If $\phi$ is the $ \mathbb{F}_q $-linear isometry for the folded Hamming distance between $\mathcal{C}$ and $\mathcal{C}'$, Proposition \ref{prop isometries} implies that there exist $A = \mathrm{diag}(A_1,\ldots,A_n)$ with $A_1,\ldots,A_n\in{\rm GL}_{r}(\mathbb{F}_q)$ and a permutation $ \sigma : [n] \longrightarrow [n] $ with matrix $P_{\sigma}\in{\rm GL}_{rn}(\mathbb{F}_q)$ such that 
\[
\phi(\mathbf{c}_1, \ldots, \mathbf{c}_n) = \left( \mathbf{c}_{\sigma(1)} A_1, \ldots, \mathbf{c}_{\sigma(n)}A_n \right), 
\]
for all $(c_1,\ldots, c_n) \in \mathbb{F}_q^{rn}$.
Hence,  there exists a change of basis matrix $B\in{\rm GL}_k(\mathbb{F}_q)$ such that   $G'=BGP_{\sigma}A$. Since $A_i$ is invertible, for every $\boldsymbol u\in\mathbb F_q^k$ we have
\[
\boldsymbol uG'_i=0
\iff
\boldsymbol uBG_{\sigma(i)}A_i=0
\iff
\boldsymbol uBG_{\sigma(i)}=0.
\]

By Proposition \ref{tipos}, we see that $\mathcal{U}_G$ and $\mathcal{U}_{G'}$ are of the same type. Therefore $\mathcal{U}_G$ and $\mathcal{U}_{G'}$ are equivalent under  the  isomorphism $ \varphi : \mathbb{F}_q^k \longrightarrow \mathbb{F}_q^k $  such that  $  \varphi(\boldsymbol{u})=\boldsymbol{u}B^{-1} $ and the permutation $\sigma$. 
Therefore $(\mathcal U_{G'})_i=\varphi\big((\mathcal U_G)_{\sigma(i)}\big)$.
\end{proof}
    
\begin{remark}
    In particular, if $\mathcal C$ is a QMDS code of type $[n,r,k,d]$, then
Corollary \ref{cor:faithful-truncation} yields a faithful family of subspaces
of type $\big[n,k-r,k,\lceil\frac{k}{r}\rceil-1\big]$. If, moreover, $\mathcal C$ is faithful, then the associated family
$\mathcal U_G$ itself is faithful of this type.

\end{remark}

\begin{remark}[Relation with pseudo-arcs]
The subspace families considered here are closely related to pseudo-arcs \cite{thas1971m, penttila2013extending, martinez2025linear}, but
they arise from the generator side rather than from the parity-check side. Let
$G=(G_1|\ldots|G_n)$ be a generator matrix of a code whose blocks have rank $r$, and let $H_i:=\operatorname{col}(G_i)\subseteq \mathbb F_q^k$ be the subspace spanned by the columns of the $i$-th block. Then
\[
U_i=\{\boldsymbol u\in\mathbb F_q^k:\boldsymbol uG_i=0\}=H_i^\perp.
\]
Thus the family $\mathcal U_G=(U_i)_{i=1}^n$ is formed by the duals of the $r$-dimensional subspaces $(H_i)_{i=1}^n$.  In particular, we see that
\[
\bigcap_{i\in I}U_i=\{0\}\Longleftrightarrow \sum_{i\in I}H_i=\mathbb F_q^k.
\]
Hence a faithful family of type $[n,k-r,k,e]$ is equivalent to a family of
$r$-dimensional subspaces of $\mathbb F_q^k$ such that every $e+1$ of them span
the whole ambient space. This is dual in spirit to the pseudo-arc condition,
where one imposes direct-sum conditions on collections of the subspaces. In particular the family $(H_i)_{i=1}^n$ is nondegenerate by \cite[Def 52]{martinez2025linear} since $e<n$ implies that $\sum_{i=1}^n H_i = \mathbb{F}_q^k$.

\end{remark}

\section{Partial spreads: the case $e=1$ (i.e. $r < k \leq 2r$)}

In this section, we consider faithful families of subspaces of type $[n,s,k,1]$.
Equivalently, we study families $(U_i)_{i=1}^n$ of $s$-dimensional subspaces of $\mathbb F_q^k$ such that
\[
U_i \cap U_j = \{0\} \quad \text{for all } i \neq j.
\]
Such families are known as partial $s$-spreads in the literature. It holds that when $s\mid k$, there exists an $s$--spread of $\mathbb F_q^k$ i.e. a partial $s$-spread that covers the whole ambient space in exactly $\frac{q^k-1}{q^s-1}$ subspaces of dimension $s$. (See  \cite[Section 2]{beutelspacher} for a detailed proof). However, when $s\nmid k$ we are trying to partition the $q^k-1$ nonzero elements of the ambient space in disjoint subsets of size $q^s-1$, while $q^s-1$ does not divide $q^k-1$. Thus we are forcing  a deficiency of at least $q^{s_0}-1$ elements, where  $k = a s + s_0$, with $0 < s_0 < s$:
\[
q^k-1=q^{as+s_0}-q^{s_0}+q^{s_0}-1=q^{s_0}(q^{as}-1)+q^{s_0}-1=q^{s_0}(q^{s}-1)\left(\sum_{i=0}^{a-1}q^{is}\right)+q^{s_0}-1.
\]

Our main goal is to give an upper bound for the maximal length of QMDS codes. By Theorem \ref{tipos homo}, partial spreads correspond to QMDS codes of dimension $k$ such that  $r<k\le2r$; after excluding the divisible case
$k=2r$, the genuinely fractional range is $r<k<2r$, which corresponds to minimum distance $d=n-1$ by the Singleton bound. In our framework of QMDS codes this is a set of parameters that plays a limited role. However our main result in the following section, Theorem \ref{thm:general_reduction_partial_spreads}, will reduce the case of a general $k$ to a bound based on bounds for partial spreads, which significantly increases the importance of this section.
\begin{definition}\label{def:mu_q}
For positive integers $k$ and $s$ with $s\leq k$, let $\mu_q(k,s)$ denote the maximum size of a partial $s$-spread in $\mathbb{F}_q^k$.
\end{definition}

First, we give the best general bound available in the literature on partial spreads \cite{drake1979partial} and translate it to bound the maximal length of a QMDS code. Second, we provide a stronger but more restrictive result \cite{nastase2017maximum} and  translate it to QMDS codes.

\begin{theorem}[Drake--Freeman {\cite{drake1979partial}}]\label{thm:drake-freeman}
Let $k$ and $s$ be positive integers with $s<k$ and write $k=as+{s_0}$ with $a$ an integer and $0<s_0<s$. Then
\[
\mu_q(k,s)\leq \frac{q^k-q^{s_0}}{q^s-1}-\lfloor \omega \rfloor-1,
\]
where
\[
2\omega=
\sqrt{1+4q^s(q^s-q^{s_0})}-(2q^s-2q^{s_0}+1).
\]

\end{theorem}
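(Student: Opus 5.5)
This bound is quoted from \cite{drake1979partial}, but I would reprove it by the standard counting argument on the set of uncovered points, the holes.

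\textbf{Setup.} Let $\mathcal S=(U_i)_{i=1}^n$ be a partial $s$-spread in $\mathbb F_q^k$ with $k=as+s_0$ and $0<s_0<s$. Write
\[
n=\frac{q^k-q^{s_0}}{q^s-1}-\delta .
\]
The identity displayed before Definition \ref{def:mu_q} shows that $\frac{q^k-q^{s_0}}{q^s-1}$ is an integer, so $\delta$ is an integer. The number of holes (nonzero vectors in no $U_i$) is
\[
N=q^{s_0}-1+\delta(q^s-1).
\]
The goal is to show $\delta\ge\lfloor\omega\rfloor+1$.

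\textbf{Counting on hyperplanes.} For each hyperplane $H$ of $\mathbb F_q^k$, every $U_i$ either lies in $H$ or meets it in dimension $s-1$. Let $x_H$ be the number of members contained in $H$. Counting the nonzero vectors of $H$ gives the number of holes in $H$:
\[
h_H=q^{k-1}-1-n(q^{s-1}-1)-x_H\,q^{s-1}(q-1).
\]
Hence $h_H\equiv q^{k-1}-1-n(q^{s-1}-1)\pmod{q^{s-1}(q-1)}$. Reducing, $h_H$ lies in a fixed residue class modulo $q^{s-1}(q-1)$. The minimal nonnegative representative should be about $q^{s_0-1}-1+\delta q^{s-1}$, minus the appropriate multiple of $q^{s-1}(q-1)$.

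\textbf{Moment equations.} The standard double counts are
\[
\sum_H h_H=N\,\frac{q^{k-1}-1}{q-1}
\qquad\text{and}\qquad
\sum_H h_H(h_H-1)=N(N-1)\,\frac{q^{k-2}-1}{q-1},
\]
since a pair of distinct holes spans either a line or a point. I have to be careful with projective versus vector counting here; it is cleaner to work with the projective points, so replace $N$ by $N/(q-1)$ and use $q^{s_0}\equiv$ the congruence above.

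\textbf{Quadratic inequality.} Write $h_H=c+m_H\,q^{s-1}$ with $m_H\ge0$ integers, where $c$ is the minimal residue. Then
\[
\sum_H (h_H-c)(h_H-c-q^{s-1})\ge 0 .
\]
Expanding with the two moment identities gives a quadratic inequality in $\delta$ involving $q^s$ and $q^{s_0}$. Solving it should give
\[
\delta^2+\delta(2q^s-2q^{s_0}+1)-q^s(q^s-q^{s_0})\ \ge\ 0
\]
or an equivalent form. Its positive root is exactly $\omega$ as defined, so $\delta>\omega$ and, since $\delta$ is an integer, $\delta\ge\lfloor\omega\rfloor+1$.

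\textbf{Main obstacle.} The hard step is getting the residue $c$ right and carrying out the algebra so that the discriminant matches $1+4q^s(q^s-q^{s_0})$ exactly. If the general dimension makes this messy, I would first reduce to a quotient or restricted setting, namely a subspace of dimension $s+s_0$ or $2s$ containing many holes, as Drake--Freeman do. In the paper, however, the result would simply be cited.
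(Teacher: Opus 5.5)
You have a genuine gap, and it sits exactly at the step you flag. The paper gives no argument of its own here; it cites Drake--Freeman (Corollary 8) in the deficiency form of Honold--Kiermaier--Kurz. A self-contained hole-counting proof would therefore be a genuinely different route, and your ingredients (holes, hyperplane divisibility, the two standard equations, a quadratic inequality) are the right ones. But two things fail as written.

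\textbf{The target inequality is false.} Its positive root is not $\omega$. For $\delta^2+b\delta-C$ with $b=2q^s-2q^{s_0}+1$, the root equals $\omega$ only if $b^2+4C=1+4q^s(q^s-q^{s_0})$. That forces $C=(q^{s_0}-1)(q^s-q^{s_0})$, not $q^s(q^s-q^{s_0})$. A concrete failure: take $q=2$, $s=3$, $k=4$. A single $3$-space of $\mathbb F_2^4$ is a partial spread with $\delta=1$, yet $1+13-48<0$.

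\textbf{The least nonnegative residue is the wrong center.} Taking $c$ to be the least nonnegative residue of $h_H$ modulo $q^{s-1}$ does not yield Drake--Freeman. The mean $\bar h$ of $h_H$ is about $N/q$, with $N$ the projective hole count. If $\bar h\notin[c,c+q^{s-1}]$, then $\sum_H(h_H-c)(h_H-c-q^{s-1})$ equals a variance plus a nonnegative multiple of $(\bar h-c)(\bar h-c-q^{s-1})\geq 0$. So the inequality is automatic and excludes nothing. Example: $q=2$, $s=6$, $s_0=5$, $\delta=3$. Then $N=220$, $h_H\equiv 28\pmod{32}$ and $\bar h\approx 110$. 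Hence $\sum_H(h_H-28)(h_H-60)\geq 0$ rules out nothing, whereas Drake--Freeman must exclude every $\delta\leq 12$.

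\textbf{The fix.} Count holes \emph{outside} $H$. Projectively,
\[
w_H=N-h_H=q^{k-1}-(n-x_H)q^{s-1}=q^{s-1}y_H,\qquad y_H\in\mathbb Z,
\]
and use $\sum_H(y_H-\delta)(y_H-\delta-1)\geq 0$. In your notation this is the center $c=N-(\delta+1)q^{s-1}$, which can be negative or exceed $q^{s-1}$. You need three facts:
\[
\sum_H w_H=Nq^{k-1},\qquad \sum_H w_H^2=Nq^{k-1}+N(N-1)(q^{k-1}-q^{k-2}),
\]
and there are $\frac{q^k-1}{q-1}$ hyperplanes. Put $M=(q-1)N=q^{s_0}-1+\delta(q^s-1)$ and multiply the inequality by $(q-1)q^{2s-k}$. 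This gives
\[
(M-\delta q^s)\bigl(M-(\delta+1)q^s\bigr)+M\geq \delta(\delta+1)q^{2s-k},
\]
that is,
\[
\delta^2+(2q^s-2q^{s_0}+1)\delta-(q^{s_0}-1)(q^s-q^{s_0})\geq \delta(\delta+1)q^{2s-k}.
\]
Note $\delta\geq 0$ because $N\geq 0$. The left side is negative at $\delta=0$, so $\delta\geq 1$. Then the right side is strictly positive, so $\delta$ exceeds the positive root of the left side, which is exactly $\omega$. Hence $\delta\geq\lfloor\omega\rfloor+1$.
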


\begin{proof}
See \cite[Corollary 8]{drake1979partial}. We use the equivalent formulation in terms
of the deficiency as presented, for instance, in \cite[Section 2.2]{honold2018partial}.
\end{proof}
The next corollary follows by combining the previous Theorem \ref{thm:drake-freeman} with Corollary \ref{cor:faithful-truncation}: a code of type $[n,r,k,d]$ with $d=n-1$ implies the existence of a faithful family of subspaces of type $[n,k-r,k,1]$, which is a partial $(k-r)$--spread. 
\begin{corollary}
Let $\mathcal{C} \subseteq \mathbb{F}_q^{rn}$ be a QMDS linear code of type $[n,r,k,d]$ such that $d = n-1$ and let $k = a(k-r) + {s_0}$ with $a$ and ${s_0}$ integers such that $ 0 < s_0 < k-r$. Then
\[
n\leq \frac{q^k-q^{s_0}}{q^{k-r}-1}-\lfloor \omega \rfloor-1,
\]
where
\[
2\omega=
\sqrt{1+4q^{k-r}(q^{k-r}-q^{s_0})}-(2q^{k-r}-2q^{s_0}+1).
\]
\end{corollary}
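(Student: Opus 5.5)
The plan is to apply Corollary \ref{cor:faithful-truncation} and then Theorem \ref{thm:drake-freeman} with $s=k-r$. Before doing so, I will pin down the parameter range. Since $\mathcal C$ is QMDS with $d=n-1$, we have $n-\lceil k/r\rceil+1=n-1$, so $\lceil k/r\rceil=2$, i.e. $r<k\le 2r$. In particular $k>r$, so Corollary \ref{cor:faithful-truncation} applies. Moreover $\mathcal C$ is non-zero because $k>0$.

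By Corollary \ref{cor:faithful-truncation}, there is a faithful family $\mathcal V=(V_i)_{i=1}^n$ of type $[n,k-r,k,n-d]=[n,k-r,k,1]$. By Definition \ref{family of subspaces} with $e=1$, every two members satisfy $V_i\cap V_j=\{0\}$ for $i\ne j$. Note that this also excludes repeated subspaces, since $k-r>0$ and a repeated member would have nonzero intersection with its copy. Hence $\{V_1,\dots,V_n\}$ consists of $n$ distinct $(k-r)$-dimensional subspaces of $\mathbb F_q^k$ that pairwise meet trivially. This is a partial $(k-r)$-spread of size $n$, so $n\le \mu_q(k,k-r)$ by Definition \ref{def:mu_q}.

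Next I check the hypotheses of Theorem \ref{thm:drake-freeman} with $s=k-r$. We need $s<k$, which holds since $r>0$. The decomposition $k=a s+s_0$ with $0<s_0<s$ is assumed in the statement. (This assumption forces $s\nmid k$; in particular it excludes the MDS case $k=2r$, where $s=r$ divides $k$.) Substituting $s=k-r$ into the Drake--Freeman bound gives exactly the stated inequality and the stated expression for $2\omega$.

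The only step needing care is the bookkeeping. One must confirm that the type produced by the corollary has intersection parameter $n-d=1$, and that the family has no repeated members, so that it genuinely counts as a partial spread in the sense of $\mu_q$. Both points were handled above, so I expect no real obstacle.
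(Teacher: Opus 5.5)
Your proposal is correct and follows the paper's own argument. In both, Corollary~\ref{cor:faithful-truncation} turns the QMDS code with $d=n-1$ into a faithful family of type $[n,k-r,k,1]$, which is a partial $(k-r)$-spread in $\mathbb{F}_q^k$, and Theorem~\ref{thm:drake-freeman} with $s=k-r$ then gives the bound. The paper leaves implicit two points that you verify explicitly: that $k>r$ (from $\lceil k/r\rceil=2$), and that no member of the family is repeated.
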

\begin{remark}
    In the notation of our previous results, there exists a spread that covers the full ambient space if, and only if, $s\mid k$. Thus it is not a problem for us  that  Theorem \ref{thm:drake-freeman} does not cover the case ${s_0}=0$ and we will exclude the case $s\mid k$ throughout the rest of the paper for this same reason.
\end{remark}

\begin{theorem}[Năstase--Sissokho {\cite{nastase2017maximum}}]\label{thm:Năstase--Sissokho}
Let $k$ and $s$ be positive integers with $s<k$ and write $k=as+{s_0}$ with $a$ an integer and $0<s_0<s$. If $s > \frac{q^{s_0} - 1}{q - 1}$, then
\[
\mu_q(k,s)= \frac{q^k - q^{s+{s_0}}}{q^s - 1} + 1.
\]
\end{theorem}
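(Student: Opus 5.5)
This is the Năstase--Sissokho theorem, so the paper will cite \cite{nastase2017maximum}. If I had to prove it myself, I would establish matching lower and upper bounds, writing $\theta=\frac{q^{s_0}-1}{q-1}$ for the number of points of $\mathrm{PG}(s_0-1,q)$.

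\emph{Lower bound.} I would use the standard construction of Beutelspacher \cite{beutelspacher}, by induction on $a$. Write $\mathbb F_q^k=\mathbb F_q^{(a-1)s}\oplus\mathbb F_q^{s+s_0}$. Inside $\mathbb F_q^{s+s_0}$, view $\mathbb F_q^{s+s_0}$ as $\mathbb F_{q^s}\times\mathbb F_q^{s_0}$ and take the $s$-spread-like family of graphs of $\mathbb F_q$-linear maps $\mathbb F_q^{s}\to\mathbb F_q^{s+s_0-s}$ coming from a spread of $\mathbb F_q^{2s}$ restricted appropriately. This gives $q^{(a-1)s}\cdot(\ldots)$ subspaces covering everything outside a fixed $(s+s_0)$-dimensional space $W$, and inside $W$ we add one more $s$-subspace. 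Concretely, the subspaces meeting $W$ trivially come from the graphs of maps $x\mapsto (x\alpha)|_{\cdots}$ for $\alpha$ in a field $\mathbb F_{q^{(a-1)s+\ldots}}$. The count is
\[
\frac{q^k-q^{s+s_0}}{q^s-1}+1 .
\]
I would simply quote this as the known construction.

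\emph{Upper bound.} This is the substantive part. Let $\mathcal P$ be a partial $s$-spread of size $N$. The number of holes (uncovered points) is
\[
h=\frac{q^k-1}{q-1}-N\frac{q^s-1}{q-1}.
\]
Writing $N=\frac{q^k-q^{s+s_0}}{q^s-1}+1+\delta$, the goal is $\delta\le 0$, i.e.\ the hole count must be at least $q^{s}\theta$ when $\delta$ vanishes. I would use the counting argument of Drake--Freeman and Beutelspacher. For every hyperplane $H$, count the holes in $H$. Each spread element lies in $H$ or meets it in an $(s-1)$-space, which gives
\[
|{\rm holes}\cap H|\equiv h \pmod{q^{s-1}} .
\]
Summing over hyperplanes, and summing over pairs, yields first- and second-moment equations for the hole distribution. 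The first moment is $h\frac{q^{k-1}-1}{q-1}$, and the second moment involves $h(h-1)$. Combining these with the divisibility constraint gives a quadratic inequality that forces $h\ge q^s\theta$ as soon as $s>\theta$. I would follow Năstase--Sissokho here, who apply a refined hole-set divisibility argument (the holes form a $q^{s-1}$-divisible set) to exclude the smaller admissible hole sizes.

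\emph{Main obstacle.} The hardest step is showing that no $q^{s-1}$-divisible point set has size strictly between $\theta$ and $q^s\theta$. The hypothesis $s>\theta$ enters exactly here, through the known classification of lengths of divisible codes (Honold--Kiermaier--Kurz \cite{honold2018partial}). I would try that route first.
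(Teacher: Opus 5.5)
The paper gives no argument of its own here. Its proof is only the pointer to \cite[Theorem 5]{nastase2017maximum}, so deferring to that reference, and to Beutelspacher's construction for the lower bound, matches what the paper does. Your description of that construction is off, though. Two $s$-subspaces of the $(s+s_0)$-dimensional $W$ always meet because $s_0<s$. The standard first step instead takes the $q^{k-s}$ graphs of $x\mapsto x\alpha$, with $\alpha\in\mathbb F_{q^{k-s}}$ and $x$ in a fixed $s$-dimensional $\mathbb F_q$-subspace of $\mathbb F_{q^{k-s}}$; these cover the complement of a $(k-s)$-dimensional subspace, and one then recurses.

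The self-contained upper-bound route you outline rests on two false claims. First, the two standard equations together with $h_H\equiv h\pmod{q^{s-1}}$ do not force $h\ge q^s\theta$ when $s>\theta$. Take $q=2$, $s=4$, $k=10$, so $s_0=2$ and $\theta=3$. A partial spread of size $66=\mu_2(10,4)+1$ would leave $h=33$ holes. Write $h-h_H=8i_H$. The equations $\sum_H 1=1023$, $\sum_H i_H=2112$, $\sum_H i_H^2=4488$ are satisfied by $33$, $891$ and $99$ hyperplanes with $i_H=1,2,3$ respectively. So no inequality derived from them gives a contradiction; consistently, Drake--Freeman only yields $\mu_2(10,4)\le 66$.

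Second, the lemma you single out as the main obstacle is false. The point set of one $s$-subspace is $q^{s-1}$-divisible with $\frac{q^s-1}{q-1}$ points, which lies strictly between $\theta$ and $q^s\theta$. What is needed is much narrower. Since $h=q^s\theta-\delta\frac{q^s-1}{q-1}$ and deleting spread elements lowers $\delta$, it suffices to rule out a hole set $\mathcal H$ of exactly $q^s\theta-\frac{q^s-1}{q-1}$ points.

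The missing ingredient is that $\mathcal H\cap H$ is $q^{s-2}$-divisible inside every hyperplane $H$, applied recursively. In the example, $4$-divisibility excludes $h_H\in\{9,1\}$, i.e.\ $i_H\in\{3,4\}$. Then the number $A_2$ of hyperplanes with $i_H=2$ satisfies $2A_2=\sum_H i_H(i_H-1)=2376$, so $A_2=1188>1023$, a contradiction.

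In general, use the Kiermaier--Kurz description of the cardinalities of $q^{s-1}$-divisible multisets as the sums $\sum_{i=0}^{s-1}a_iq^i\frac{q^{s-i}-1}{q-1}$ with $a_i\ge 0$. A base-$q$ digit count then shows that $q^s\theta-\frac{q^s-1}{q-1}$ is attainable if and only if $\theta\ge s$; this is precisely where the hypothesis $s>\theta$ enters. Within the paper, Theorem~\ref{thm:HKK-thm9} with $z=0$ already gives this upper bound for $a\ge 2$, and the case $a=1$ is trivial.
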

\begin{proof}
    See \cite[Theorem 5]{nastase2017maximum}.
\end{proof}
The next corollary follows by combining the previous Theorem \ref{thm:Năstase--Sissokho} with Corollary \ref{cor:faithful-truncation}.
\begin{corollary}\label{nastase}
Let $\mathcal{C} \subseteq \mathbb{F}_q^{rn}$ be a QMDS linear code of type $[n,r,k,d]$ such that $d = n-1$ and $n$ is maximum among all such codes. Let $k = a(k-r) + s_0$ with $a$ and $s_0$ integers such that $0 < s_0 < k-r$. If $k-r > \frac{q^{s_0} - 1}{q - 1}$,
then 
\[
n = \frac{q^k - q^{k-r+s_0}}{q^{k-r} - 1} + 1.
\]  
\end{corollary}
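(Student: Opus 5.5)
The argument has two halves: an upper bound on $n$ from the Năstase--Sissokho theorem, and a matching construction. The two are linked through the correspondence between codes with $d=n-1$ and partial $(k-r)$-spreads.

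For the upper bound, take any QMDS code of type $[n,r,k,d]$ with $d=n-1$. Here $k>r$: if $k\le r$, then $\lceil k/r\rceil=1$ and QMDS would force $d=n$, not $n-1$. So Corollary \ref{cor:faithful-truncation} applies and gives a faithful family of type $[n,k-r,k,1]$, i.e. a partial $(k-r)$-spread of $\mathbb F_q^k$ with $n$ members. Hence $n\le \mu_q(k,k-r)$.

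Next apply Theorem \ref{thm:Năstase--Sissokho} with $s=k-r$. Its hypotheses are exactly those of the corollary: $s<k$ since $r>0$, $k=as+s_0$ with $0<s_0<s$, and $s>\frac{q^{s_0}-1}{q-1}$. It gives
\[
\mu_q(k,k-r)=\frac{q^k-q^{k-r+s_0}}{q^{k-r}-1}+1,
\]
which is the claimed upper bound.

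For the reverse inequality we need a code attaining the bound. By Theorem \ref{thm:Năstase--Sissokho} there is a partial $(k-r)$-spread $\mathcal U$ of size $N:=\mu_q(k,k-r)$. Three checks are needed before transferring it to a code.

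First, $\mathcal U$ has type $[N,k-r,k,1]$. This needs $1<N$ and that $1$ is the minimal intersection parameter, i.e. not $0$. The parameter is not $0$ because each member is nonzero ($k-r\ge1$). And $N\ge2$ because the formula gives $N\ge 1+q^{s_0}(\dots)$ once $a\ge 2$.

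Second, we must have $a\ge2$, i.e. $s<k/2$. This holds because $s\nmid k$ forces $k\ge s+s_0$ with $s_0\ge1$, and then $a=1$ would give $k<2s$, i.e. $r<k-r$.

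This second check is the main obstacle: the case $a=1$ is problematic. If $r<s$, two $s$-subspaces of $\mathbb F_q^k$ always meet nontrivially, since $2s>k$. Then $\mu_q=1$, and the formula also gives $1$, so the bound holds trivially. But a family of size $1$ has $e<n$ fail, so no code with $d=n-1$ exists, and $n$ maximal is vacuous. I will handle this by noting that $k=s+s_0$ gives $q^k-q^{s+s_0}=0$, so the claimed equality reads $n=1$. I will either treat this degenerate case separately or remark that it is vacuous.

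Third, once the type is established, Theorem \ref{tipos homo} item 2 applied to a parity-check matrix of $\mathcal U$ gives a faithful code of type $[N,r,k,N-1]$. We have $\lceil k/r\rceil=2$, because $r<k$ and $k\le 2r$ (from $s=k-r\ge1$ and $r\ge s$ when $a\ge2$). So the code is QMDS with $d=N-1$, showing that the maximum length equals $N$.
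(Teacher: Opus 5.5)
Your proof is correct and follows the paper's route: Corollary \ref{cor:faithful-truncation} gives a partial $(k-r)$-spread, and the N\u{a}stase--Sissokho theorem bounds its size. You also make explicit the attainment direction via Theorem \ref{tipos homo}, item 2; the paper leaves this implicit, but the equality claim needs it. Your worry about $a=1$ is settled directly by the hypothesis: a QMDS code with $d=n-1$ forces $\lceil k/r\rceil=2$, so $k\le 2r$, i.e.\ $k\ge 2(k-r)$, and hence $a=\lfloor k/(k-r)\rfloor\ge 2$. So the degenerate case never occurs, and you can replace your ``second check'' with this. As written, that check does not actually derive $a\ge 2$ from $s\nmid k$.
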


\medskip

In the context of previous results, let $\mathcal{C} \subseteq \mathbb{F}_q^{rn}$ be a QMDS linear code of type $[n,r,k,d]$. Further refinements for the case $k-r \le \frac{q^{s _0} -1}{q-1}$ can be obtained using the theory of vector space partitions and tail conditions, which impose additional arithmetic and structural constraints.
However, the case $r< k<2r$ plays a limited role in the study of QMDS codes. Hence we will only translate to our framework one result that we will need for general values of $k$ and we will sketch the ideas for further refinements and refer the reader to \cite{honold2018partial, nastase2017maximumII} for further detail.

\medskip
\noindent
\textbf{Upper bounds via vector space partitions.}
Upper bounds for partial spreads can be obtained by embedding the family into a vector space partition of $\mathbb F_q^k$. 
In this framework, one considers decompositions of the ambient space into subspaces of varying dimensions, and derives constraints from counting points and intersections \cite{honold2018partial, nastase2017maximumII}.

A key principle is that any partial $(k-r)$-spread can be extended to a vector
space partition whose remaining subspaces, the so-called \emph{tail}, have
dimension smaller than $k-r$. The structure of this tail imposes additional
arithmetic restrictions on the possible size of the partial spread. When
$k=a(k-r)+s_0$ with $a$ integer and  $0<s_0<k-r$, these methods lead to upper bounds of the form
\[
n \leq \frac{q^k-q^{k-r+s_0}}{q^{k-r}-1}+\Delta,
\]
where the integer correction term $\Delta$ depends on the admissible tail
structure. 

In the Năstase--Sissokho regime one has $\Delta=1$, giving the exact bound from
Corollary \ref{nastase}. Outside this regime, vector-space-partition methods
still yield upper bounds of the same general form, typically with a larger
correction term depending on the tail. Although such bounds need not improve
the Năstase--Sissokho value, they apply in parameter ranges where that exact
formula is not available.


\begin{theorem}[Honold--Kiermaier--Kurz {\cite[Corollary 7]{honold2018partial}}]\label{thm:HKK-thm9}
Let $v,s,\rho,t,u,z$ be integers such that $v=st+\rho$, $1\leq\rho<s$, $t\geq 2$,   $u,z\geq 0$ and $s=\frac{q^\rho-1}{q-1}+1-z+u>\rho$. Then
\[
\mu_q(v,s)\leq \frac{q^{v-s}-q^\rho}{q^s-1} q^s+1+z(q-1).
\]
\end{theorem}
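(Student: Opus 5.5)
Since this is \cite[Corollary 7]{honold2018partial}, the proof in the paper will most likely just cite it. The plan below is how I would reprove it directly, via the divisible-code method of Honold--Kiermaier--Kurz.

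\textbf{Step 1: reduce to a hole set.} Let $\mathcal U$ be a partial $s$-spread in $\mathbb F_q^v$ of size $N$, and let $\mathcal H$ be the set of holes, i.e.\ the points of $\mathrm{PG}(v-1,q)$ covered by no member of $\mathcal U$. Counting points gives
\[
|\mathcal H|=\frac{q^v-1}{q-1}-N\frac{q^s-1}{q-1}.
\]
The key structural fact is that $\mathcal H$ is $q^{s-1}$-divisible: every hyperplane $W$ meets $\mathcal H$ in $|\mathcal H| \bmod q^{s-1}$ points. To see this, note that each spread element lies in $W$ or meets it in an $(s-1)$-space. Hence $|\mathcal H\setminus W|$ equals $q^{v-1}$ minus a multiple of $q^{s-1}$, and since $s\le v-1$ this is $\equiv 0 \pmod{q^{s-1}}$. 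This is the standard argument and I would reproduce it in a few lines.

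\textbf{Step 2: parametrize the excess.} Write $N=\frac{q^{v-s}-q^\rho}{q^s-1}q^s+1+x$, and suppose for contradiction that $x>z(q-1)$. Using $v=st+\rho$, substitute $N$ into the hole count. Since $\frac{q^{v-s}-q^\rho}{q^s-1}q^s\cdot\frac{q^s-1}{q-1}=\frac{q^v-q^{\rho+s}}{q-1}$, this simplifies to
\[
|\mathcal H|=\frac{q^{\rho+s}-q^s}{q-1}-x\,\frac{q^s-1}{q-1}
= q^s\frac{q^\rho-1}{q-1}-x\,\frac{q^s-1}{q-1}.
\]
Expanding $x=j(q-1)+i$ with $0\le i<q-1$ writes $|\mathcal H|$ in the form $a\,q^{s}-b\,\frac{q^{s}-1}{q-1}$ with small coefficients $a,b$. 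Here the relation $s=\frac{q^\rho-1}{q-1}+1-z+u$ is what ties $a$ and $b$ to $s$.

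\textbf{Step 3: exclude the resulting size.} The goal is the HKK classification of possible sizes of $q^{r}$-divisible sets with $r=s-1$. Such a size is a nonnegative integer combination of $q^{r+1}$ and the numbers $s_q(r,i)=q^{i}\frac{q^{r-i+1}-1}{q-1}$, and this gives a concrete list of forbidden intervals. I would prove the needed nonexistence by induction on $r$, taking a hyperplane $W$ with $|\mathcal H\cap W|$ minimal, and combining two ingredients:
\begin{itemize}
\item $\mathcal H\cap W$ is $q^{r-1}$-divisible, so the induction hypothesis applies to it;
\item the first two standard equations (counting hyperplanes and point--hyperplane incidences) bound the average of $|\mathcal H\cap W|$.
\end{itemize}
Together these give a contradiction whenever $|\mathcal H|$ falls into a gap. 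The condition $s>\rho$ and the hypothesis $t\ge2$ guarantee that $v$ is large enough for the incidence count to be nontrivial.

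\textbf{Main obstacle.} The hard step is Step 3: verifying that the specific value from Step 2 with $x>z(q-1)$ lies in a forbidden gap, i.e.\ a careful arithmetic bookkeeping of which sums $aq^{s}-b\frac{q^s-1}{q-1}$ are realizable. I would first try the variance-type inequality $\sum_W(|\mathcal H\cap W|-m)(|\mathcal H\cap W|-m-q^{s-1})\ge 0$ with $m$ the minimal admissible intersection size. If that fails near the boundary, I would fall back on the exact HKK divisible-set theorem \cite{honold2018partial} and simply check the parameters against it.
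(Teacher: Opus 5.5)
The paper does not reprove this statement. Its proof is the citation of \cite[Corollary 7]{honold2018partial} plus the remark that $\mu_q(v,s)=A_q(v,2s;s)$, because $d_S(U,W)=2s-2\dim(U\cap W)$.

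Your Steps 1 and 2 are correct and are the right starting point of the divisible-set method; by passing to a sub-spread it suffices to treat $x=z(q-1)+1$. The genuine gap is in Step 3. Beyond the arithmetic being left undone, the criterion you aim for cannot exclude the relevant cardinality once $z\geq 1$. Being ``a nonnegative integer combination of $q^{r+1}$ and the $s_q(r,i)$'' characterizes cardinalities of $q^{r}$-divisible \emph{multisets}, and never uses that $\mathcal H$ is a set. With $x=z(q-1)+1$ and $\frac{q^\rho-1}{q-1}-z=s-1-u$, your Step 2 gives
\[
|\mathcal H|=(s-1-u)q^s-\frac{q^s-1}{q-1}+z .
\]
Take $u=0$ and $z\geq 1$, which is exactly where the theorem says more than N\u{a}stase--Sissokho. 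Then this exceeds $(s-1)q^s-\frac{q^s-1}{q-1}$, the largest cardinality with no $q^{s-1}$-divisible multiset, so there is no forbidden interval to land in.

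Concretely, take $q=2$, $\rho=2$, $s=3$, $t=2$ (so $v=8$), $z=1$, $u=0$. The bound is $34$, and a partial $3$-spread of size $35$ would leave $|\mathcal H|=10=s_2(2,1)+s_2(2,2)=6+4$ holes. This size is realized by a $4$-divisible multiset: a line with all points doubled, plus one point of multiplicity $4$. An induction on hyperplane sections driven by the first two standard equations cannot rule this out, since those equations read the same for sets and multisets.

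The missing idea is projectivity of $\mathcal H$. It enters through the third standard equation $\sum_W\binom{|\mathcal H\cap W|}{2}=\binom{|\mathcal H|}{2}\frac{q^{v-2}-1}{q-1}$; for a multiset with multiplicities $\lambda_P$ this acquires the extra nonnegative term $q^{v-2}\sum_P\binom{\lambda_P}{2}$. In the example, $|\mathcal H\cap W|\in\{2,6,10\}$ for every hyperplane $W$, and the three standard equations give
\[
\sum_W\bigl(|\mathcal H\cap W|-2\bigr)\bigl(|\mathcal H\cap W|-6\bigr)=90\cdot 63-70\cdot 127+12\cdot 255=-160<0,
\]
although every summand is nonnegative. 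This is the kind of quadratic argument behind the Kurz and Honold--Kiermaier--Kurz bounds.

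Your ``variance-type'' fallback has this shape, but it only works if you do three things:
\begin{enumerate}
\item use the set version of the pair-counting equation;
\item choose the right pair of consecutive admissible intersection sizes for general parameters;
\item verify the sign from $s=\frac{q^\rho-1}{q-1}+1-z+u$.
\end{enumerate}
None of this is done, and it is where the whole content of the result lies. Your last fallback, checking the parameters against the HKK divisible-set theorem, is simply the citation the paper makes.
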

\begin{proof}
This is \cite[Corollary 7]{honold2018partial} after translating notation: the
parameter denoted by $k$ there corresponds to our $s$, and the residue denoted
by $r$ there corresponds to our $\rho$. Note that the quantity $A_q(v,2s;s)$ denotes the maximum number of $s$-dimensional subspaces of 
$\mathbb F_q^v$ with pairwise subspace distance at least $2s$. Since the subspace distance
between two $s$-dimensional subspaces $U,W$ is
\[
d_S(U,W)=2s-2\dim(U\cap W),
\]
the condition $d_S(U,W)=2s$ is equivalent to $U\cap W=\{0\}$. Therefore
\[
\mu_q(v,s)=A_q(v,2s;s).
\]
\end{proof}

In conclusion, in this section we see that in the regime $e=1$ (i.e. the range $r < k < 2r$), the problem of bounding $n$ reduces to the well-developed theory of partial spreads, which can be directly applied in the QMDS setting.

\section{Subspace packings: the general case}

We now turn to the general case. The best known general upper bound for the
maximal length of QMDS codes is due to Ball, Lavrauw and Popatia
\cite{ball2024griesmer}. They use Griesmer-like techniques to bound the maximal length of fractional MDS codes, which are the same as QMDS codes. Translated to our framework, the bound from \cite{ball2024griesmer} states that a QMDS code of type $[n,r,k,d]$ such that $k=er+r_0$ with $e=\lceil\frac{k}{r}\rceil-1$ and $0< r_0< r$ satisfies
\[
n\leq e-1+q^r+\frac{q^r-1}{q^{r_0}-1}.
\]

The procedure in this section will be exploiting the link between the literature of subspace packings and  folded Hamming distance codes through families of subspaces.  First we present what is known as the packing bound, a classical starting point derived from a usual counting argument.

\begin{proposition}\label{packing bound} Let  $ \mathcal{U} = (U_i)_{i=1}^n $ be a faithful family of subspaces of type $[n,s,k,e]$, then:
$$n\leq \left\lfloor\frac{e(q^{k}-1)}{q^{s}-1}\right\rfloor.$$

\end{proposition}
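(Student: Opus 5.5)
The plan is a double count of the pairs (point, subspace) in which the point lies in the subspace. Let $P$ be the set of nonzero vectors of $\mathbb F_q^k$, so $|P|=q^k-1$. Consider
\[
N=\left|\{(\mathbf{x},i)\in P\times[n] : \mathbf{x}\in U_i\}\right|.
\]
We compute $N$ in two ways.

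First, fix $i$. Since the family is faithful, each $U_i$ has dimension exactly $s$. It therefore contains $q^s-1$ nonzero vectors, and so
\[
N=n(q^s-1).
\]

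Second, fix $\mathbf{x}\in P$ and let $I_{\mathbf x}=\{i\in[n]:\mathbf x\in U_i\}$. If $|I_{\mathbf x}|\ge e+1$, we could pick $I\subseteq I_{\mathbf x}$ with $|I|=e+1$. Then $\mathbf x$ would be a nonzero vector in $\bigcap_{i\in I}U_i$, contradicting condition (3) of Definition \ref{family of subspaces}. Hence $|I_{\mathbf x}|\le e$ for every $\mathbf{x}\in P$, and summing over $P$ gives
\[
N\le e(q^k-1).
\]

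Combining the two counts gives $n(q^s-1)\le e(q^k-1)$. We may assume $s\ge1$: if $s=0$, every $U_i=\{0\}$ and the stated bound has a zero denominator, so that case is degenerate and excluded. Dividing by $q^s-1>0$ yields $n\le e(q^k-1)/(q^s-1)$, and since $n$ is an integer we may take the floor.

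There is no real obstacle here. The only point needing care is that faithfulness is what makes every $U_i$ contribute exactly $q^s-1$ points. Without it, the subspaces would only have dimension at least $s$; the count $N\ge n(q^s-1)$ would still hold, but it is cleaner to invoke faithfulness directly, as in the statement.
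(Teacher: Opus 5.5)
Your proof is correct and is the same argument as the paper's. The paper builds the $n\times(q^k-1)$ incidence matrix between the members of $\mathcal U$ and the nonzero vectors, then counts its ones by rows to get $n(q^s-1)$ and by columns to get at most $e(q^k-1)$; that is exactly your double count of $N$, and your explicit exclusion of the degenerate case $s=0$ is a small detail the paper leaves implicit.
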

\begin{proof}
     We enumerate the $q^k-1$ nonzero elements of the vector space: $\mathbb F_q^k\setminus\{0\}=\{f_1,f_2,\dots,f_{q^k-1}\}$. We will consider a matrix $A=(a_{ij})$ of size $n\times (q^k-1)$, where $a_{ij}=1$ if $f_j\in U_i$ and $a_{ij}=0$ otherwise. Counting the ones by rows we get $n(q^s-1)$. Alternatively, we see that the ones counted by columns must be at most $e(q^{k}-1)$, which yields the desired bound. 
\end{proof}
\begin{remark}

If $s\mid k$, then the packing bound is attained: indeed, one may take an
$s$--spread of $\mathbb F_q^k$ and repeat each member $e$ times.  As we mentioned before, we will exclude the case $s\mid k$ from the study for this same reason. 
\end{remark}

According to the development of the literature, we will classify the results depending on the value of the parameter $e$.  For a QMDS code, we have that $e=n-d=\left\lceil \frac{k}{r}\right\rceil-1$  (see Proposition \ref{tipos} and Corollary \ref{cor:faithful-truncation}). Thus, fixing $e$ is equivalent to considering QMDS codes with $er<k\leq(e+1)r$. We will exclude the case $r\mid k$, as $r\mid k$ implies that the QMDS problem falls into the classical MDS regime, which is outside the scope of this work (since it would be about proving or disproving the MDS conjecture).

\subsection{A reduction to partial spreads for the case $e\geq 2$ (i.e. $k>2r$)}

In this subsection, we show that the problem of bounding the size of a faithful family of subspaces of type $[n,s,k,e]$ can be reduced to bounding the size of suitable partial spreads. This reduction will be the main source of our upper bounds for the general case $k>2r$, where $r=k-s$.

\begin{theorem}\label{thm:general_reduction_partial_spreads}
Let $\mathcal{U}=(U_i)_{i=1}^n$ be a family of subspaces of type $[n,s,k,e]$, let $
r=k-s$ and assume that $k=er+r_0$ with $0< r_0< r$ (so $e$ is precisely the parameter $ \left\lceil \frac{k}{r} \right\rceil - 1 $ that arises from the related code being QMDS). Then, for every integer $m$ with $0\leq m\leq e-1$, one has
\[
\binom{n-m}{e-m}\leq \mu_q(k-mr,r_0).
\]
See Definition \ref{def:mu_q} for $\mu_q$.
\end{theorem}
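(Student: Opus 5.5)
The plan is to build, for each $m$, an explicit partial $r_0$-spread of size $\binom{n-m}{e-m}$ inside an $\mathbb F_q$-space of dimension $k-mr$. It is indexed by the $e$-subsets of $[n]$ that contain a fixed $m$-subset. The only tools needed are the dimension formula for intersections and condition (3) of Definition \ref{family of subspaces}.

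\emph{Step 1 (codimension count).} Since $s=\min_i\dim U_i$ and $r=k-s$, each $U_i$ has codimension at most $r$ in $\mathbb F_q^k$. Hence any intersection of $j$ members of $\mathcal U$, or of $j$ members with a subspace $V$, loses at most $jr$ dimensions:
\[
\dim\Big(V\cap\bigcap_{i\in J}U_i\Big)\ \ge\ \dim V-|J|\,r .
\]
I will not need faithfulness. Only the codimension bound and the $(e+1)$-intersection property are used.

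\emph{Step 2 (reduction to an ambient space of dimension $k-mr$).} Fix $m$ with $0\le m\le e-1$ and the indices $1,\dots,m$; if $n<e$ the binomial coefficient vanishes and there is nothing to prove. Put $V=\bigcap_{i=1}^m U_i$, with $V=\mathbb F_q^k$ when $m=0$. By Step 1, $\dim V\ge k-mr$. Choose any subspace $V'\subseteq V$ with $\dim V'=k-mr$ exactly, and identify $V'\cong\mathbb F_q^{k-mr}$.

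This truncation is the point I expect to need care. Had we kept $V$ itself, its dimension could exceed $k-mr$, and $\mu_q$ is increasing in the ambient dimension, so the bound would go the wrong way. Passing to $V'$ costs nothing, because the lower bound in Step 3 only uses $\dim V'$.

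\emph{Step 3 (the spread).} For each subset $J\subseteq\{m+1,\dots,n\}$ with $|J|=e-m$, set $W_J=V'\cap\bigcap_{i\in J}U_i$. By Step 1,
\[
\dim W_J\ \ge\ (k-mr)-(e-m)r=k-er=r_0>0,
\]
so we may choose an $r_0$-dimensional subspace $X_J\subseteq W_J$.

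If $J\neq J'$, then $I=[m]\cup J\cup J'$ has at least $m+(e-m)+1=e+1$ elements. Therefore
\[
X_J\cap X_{J'}\subseteq\bigcap_{i\in I}U_i=\{0\}
\]
by condition (3) of Definition \ref{family of subspaces}. The same argument shows the $X_J$ are pairwise distinct.

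Thus $(X_J)_J$ is a partial $r_0$-spread in $V'\cong\mathbb F_q^{k-mr}$ with $\binom{n-m}{e-m}$ members. Note that $r_0<r\le k-mr$, so $\mu_q(k-mr,r_0)$ is defined. This gives $\binom{n-m}{e-m}\le\mu_q(k-mr,r_0)$.
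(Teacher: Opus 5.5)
Your proposal is correct and follows essentially the same argument as the paper. The paper also fixes $m$ members, truncates their intersection to a subspace of dimension exactly $k-mr$, picks an $r_0$-dimensional subspace inside each further intersection with $e-m$ members, and uses the fact that two distinct index sets cover at least $e+1$ members to get a partial $r_0$-spread. Your explicit remark that the $X_J$ are pairwise distinct, being nonzero and meeting trivially, is a small point the paper leaves implicit.
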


\begin{proof}
Fix an integer $m$ with $0\leq m\leq e-1$  and any subset $M\subseteq [n]$ of size $|M|=m$. Define
\[
U'=\bigcap_{i\in M} U_i.
\]
Intersecting with each $U_i$ can decrease the dimension by at most $r$ because each $U_i$ has codimension at most $r$. Thus
\[
\operatorname{dim}\left(U'\right)\geq k-\sum_{i\in M}\operatorname{codim}(U_i)\geq k-mr.
\]
We choose a subspace $U\subseteq U'$ such that $\dim(U)=k-mr$. Now, for any given subset $T\subseteq [n]\setminus M$ with $|T|=e-m$, each $U_j$ with $j\in T$ has codimension at most $r$ and intersecting we see that
\[
\dim\!\left(U\cap \bigcap_{j\in T}U_j\right)
\geq (k-mr)-(e-m)r
= k-er
= r_0.
\]
Once again, let $V_{T}\subseteq U\cap \displaystyle\bigcap_{j\in T}U_j$ be a subspace of dimension exactly $r_0$. Consider the family
\[
\mathcal{V}=\{V_{T}: T\subseteq [n]\setminus M,\ |T|=e-m\}.
\]
We will show that $\mathcal{V}$ is a partial $r_0$-spread in $U$. Let $T,T'\subseteq [n]\setminus M$ be distinct subsets of size $e-m$, and suppose that there exists a nonzero vector $x\in V_{T}\cap V_{T'}$. We see that $V_T$ is contained in $e$ different subspaces of the family $\mathcal{U}$: $V_T\subseteq U_i$ for all $i\in M\cup T$. Similarly for $ T^\prime $. Since $x\in V_{T}\cap V_{T'}$, then $x$ belongs to at least $e+1$ different subspaces of  $\mathcal{U}$ because $T$ and $T'$ are distinct  (thus $ | M\cup T \cup T^\prime | \geq e+1 $). Thus $x=0$, as $\mathcal{U}=(U_i)_{i=1}^n$ is of type $[n,s,k,e]$. Hence $\mathcal{V}$ is a partial $r_0$-spread inside the ambient space $U$. This implies that the size of $\mathcal{V}$, the number of possible subsets $T\subseteq [n]\setminus M$ of size $e-m$, must verify: 
\[
\binom{n-m}{e-m}\leq \mu_q(k-mr,r_0).
\]

\end{proof}

\begin{remark}
We see that the core idea behind the proof of Theorem \ref{thm:general_reduction_partial_spreads} is deriving a bound from a necessary but not sufficient criterion for the family of subspaces to be of type $[n,s,k,e]$. Specifically, instead of imposing that $\displaystyle \bigcap_{i\in I}U_i=\{0\}$, for all $I\subseteq[n]$ such that $|I|=e+1$ we are only considering the subset of intersections with $m$ intersecting subspaces already fixed. Hence, any upper bound obtained using Theorem \ref{thm:general_reduction_partial_spreads} should not be expected to be attained in general. Let us illustrate this in a small example over $\mathbb F_2$. Take
\[
k=5,\qquad r=2,\qquad e=2,\qquad r_0=1,
\]
so that $k=er+r_0$ and $s=k-r=3$.  Let
$e_1,\ldots,e_5$ be an ordered basis of $\mathbb F_2^5$, and put
\[
U_1=\langle e_1,e_2,e_3\rangle,
\qquad
W=\langle e_4,e_5\rangle.
\]
Let $L_1,\ldots,L_7$ be the seven lines of $U_1$, namely
\[
\langle e_1\rangle,\quad
\langle e_2\rangle,\quad
\langle e_3\rangle,\quad
\langle e_1+e_2\rangle,\quad
\langle e_1+e_3\rangle,\quad
\langle e_2+e_3\rangle,\quad
\langle e_1+e_2+e_3\rangle.
\]
For $j=1,\ldots,7$, define
\[
U_{j+1}=L_j+W.
\]
Then $U_1,U_2,\ldots,U_8$ are $3$-dimensional subspaces of $\mathbb F_2^5$.
If we consider intersections of two subspaces with $U_1$ fixed, then
\[
U_1\cap U_{j+1}=L_j
\qquad
\text{for }j=1,\ldots,7.
\]
Thus the intersections obtained from fixing $U_1$  form the full partial
$1$-spread of $U_1\cong\mathbb F_2^3$. Consequently, the local inequality
\[
\binom{n-1}{1}\leq \mu_2(3,1)=\frac{2^3-1}{2-1}=7
\]
is attained, giving $n=8$.

However, the original family is not of type $[8,3,5,2]$. Indeed, all the
subspaces $U_2,\ldots,U_8$ contain the same plane $W=\langle e_4,e_5\rangle$.
In particular,
\[
W\subseteq U_2\cap U_3\cap U_4,
\]
so a nonzero vector is contained in three members of the family, violating the
condition $e=2$. This shows that even when the partial spread obtained after
fixing $m$ members is maximal, it need not encode the global incidence
conditions required for the original family.
\end{remark}

\subsection{First consequences using simple results}

Theorem \ref{thm:general_reduction_partial_spreads} gives a whole family of upper bounds, depending on the choice of $m$ and the estimation of $\mu_q(k-mr,r_0)$ that we use.  First we combine  Theorem \ref{thm:general_reduction_partial_spreads}  with  the packing bound (specialized to partial spreads) from Proposition  \ref{packing bound}.  Later we show in Corollary \ref{cor:ball_recovered} that we can recover the bound from \cite{ball2024griesmer} (see \ref{eq:starting point}) from this result.

\begin{proposition}\label{prop:th 6 plus packing bound}
    Let $\mathcal{U}=(U_i)_{i=1}^n$ be a family of subspaces of type $[n,s,k,e]$, let $
r=k-s$ and assume that $k=er+r_0$ with $0< r_0< r$. Then, for every integer $m$ with $0\leq m\leq e-1$, one has
\[
\binom{n-m}{e-m}\leq \left\lfloor \frac{q^{k-mr}-1}{q^{r_0}-1}\right\rfloor\leq \frac{q^{k-mr}-1}{q^{r_0}-1}.
\]
\end{proposition}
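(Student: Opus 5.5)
The plan is to combine Theorem \ref{thm:general_reduction_partial_spreads} with the packing bound of Proposition \ref{packing bound}, specialized to partial spreads. Fix $m$ with $0\leq m\leq e-1$. By Theorem \ref{thm:general_reduction_partial_spreads}, we have $\binom{n-m}{e-m}\leq \mu_q(k-mr,r_0)$. It therefore suffices to show that
\[
\mu_q(k-mr,r_0)\leq \left\lfloor \frac{q^{k-mr}-1}{q^{r_0}-1}\right\rfloor .
\]
The second inequality in the statement is then trivial, since $\lfloor x\rfloor\leq x$.

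To bound $\mu_q(k-mr,r_0)$, take a partial $r_0$-spread $(V_1,\dots,V_N)$ of maximum size $N=\mu_q(k-mr,r_0)$ in $\mathbb F_q^{k-mr}$. This is a faithful family of $r_0$-dimensional subspaces in which every nonzero vector lies in at most one member. We must check that it fits the hypotheses of Proposition \ref{packing bound} with $e=1$.

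The one formal wrinkle is that Definition \ref{family of subspaces} requires $e<n$, that is, $N\geq 2$. This holds because $k-mr\geq k-(e-1)r=r+r_0>r_0$. Hence $\mathbb F_q^{k-mr}$ contains two trivially intersecting $r_0$-dimensional subspaces, and $N\geq 2$. The intersection parameter is then exactly $1$, since distinct members meet trivially.

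With this in place, Proposition \ref{packing bound} gives $N\leq \lfloor (q^{k-mr}-1)/(q^{r_0}-1)\rfloor$. Alternatively, the counting argument can be repeated directly: the members are pairwise disjoint off $0$, so $N(q^{r_0}-1)\leq q^{k-mr}-1$. Since $N$ is an integer, the floor follows.

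No step is a real obstacle. The only care needed is the degenerate-case check that $N\geq 2$, which is required for the family to have type $[N,r_0,k-mr,1]$. One could also sidestep that check entirely by using the direct point-counting argument.
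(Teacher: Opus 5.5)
Your proposal is correct and follows the paper's route: the paper obtains this proposition by combining Theorem \ref{thm:general_reduction_partial_spreads} with the packing bound of Proposition \ref{packing bound}, specialized to partial spreads ($e=1$). One small slip: two trivially intersecting $r_0$-dimensional subspaces exist only when $k-mr\geq 2r_0$, which holds because $k-mr\geq r+r_0>2r_0$ (the condition $k-mr>r_0$ alone would not suffice); as you note, the direct count $N(q^{r_0}-1)\leq q^{k-mr}-1$ makes this check unnecessary anyway.
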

Next we provide a Lemma that will help us choose which value of $m$ produces the best bound in  an asymptotic sense. We will also obtain more bounds using this Lemma in the following sections.
\begin{lemma}\label{lem:asymptotic-m-choice}
Fix $q$ and $e$.  Let $0<D< 2$ be independent of $m$ and $r$ and suppose that, for every $m\in\{0,\ldots,e-1\}$, we have that
\[
\binom{n-m}{e-m}
\leq
Dq^{(e-m)r}+o\bigl(q^{(e-m)r}\bigr).
\]
Then, for all sufficiently large $r$,
the tightest bound is obtained for $m=e-1$. 
\end{lemma}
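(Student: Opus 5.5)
The plan is to turn each of the $e$ inequalities into an explicit upper bound $n\le N_m(r)$ and compare the leading terms of these bounds as $r\to\infty$. Here $N_m(r)$ is the largest integer $n$ satisfying the hypothesis for that $m$, with $q$ and $e$ fixed. Write $j=e-m\in\{1,\ldots,e\}$, so the $m$-th hypothesis reads $\binom{n-m}{j}\le Dq^{jr}+o(q^{jr})$. I will show that $N_m(r)=c_j\,q^r(1+o(1))$ with
\[
c_j=(j!\,D)^{1/j},
\]
and then that $c_1=D$ is strictly the smallest of $c_1,\ldots,c_e$. Since $j=1$ corresponds to $m=e-1$, the strict inequality between leading constants means that for all sufficiently large $r$ the choice $m=e-1$ gives the smallest admissible $n$.

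\textbf{Step 1: the case $m=e-1$.} Here $j=1$ and the hypothesis is just $n-e+1\le Dq^r+o(q^r)$. This gives $N_{e-1}(r)=Dq^r+o(q^r)$ directly.

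\textbf{Step 2: the case $j\ge 2$.} I will use the elementary two-sided estimates
\[
\frac{(n-m-j+1)^j}{j!}\le\binom{n-m}{j}\le\frac{(n-m)^j}{j!}.
\]
\emph{Upper bound on $N_m$.} The left inequality and the hypothesis give $(n-m-j+1)^j\le j!\,D\,q^{jr}(1+o(1))$. Taking $j$-th roots, $n\le m+j-1+c_j q^r(1+o(1))$, so $N_m(r)\le c_j q^r(1+o(1))$.
\emph{Matching lower bound.} The right inequality shows that every $n$ with $n-m\le (1-\varepsilon)c_j q^r$ satisfies the hypothesis once $r$ is large. Hence $N_m(r)\ge(1-\varepsilon)c_jq^r$ for every fixed $\varepsilon>0$.
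This matching lower bound is needed so that ``tightest bound'' is measured by the true threshold $N_m(r)$, and not merely by an upper estimate of it.

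\textbf{Step 3: comparing constants.} Since $D>0$, the inequality $c_1<c_j$ is equivalent to $D^j<j!\,D$, that is, $D^{j-1}<j!$. Because $0<D<2$ and $j\ge2$,
\[
D^{j-1}<2^{j-1}\le j!,
\]
where $2^{j-1}\le j!$ holds for all $j\ge1$ by a trivial induction. There are only finitely many $m$ and the gaps $c_j-c_1$ are fixed positive numbers, so there is an $r_0$ such that $N_{e-1}(r)<N_m(r)$ for all $m<e-1$ and all $r\ge r_0$.

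The main obstacle is bookkeeping rather than mathematics. The $o(\cdot)$ terms must be uniform enough to take $j$-th roots, which is fine because $j\le e$ is bounded and everything depends only on $r$. We also need to observe that the additive shifts $m+j-1$ are $O(1)$, and so are negligible against $q^r$.
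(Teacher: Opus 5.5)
Your proof is correct and is essentially the paper's argument: the same estimate $\binom{N}{j}\ge (N-j+1)^j/j!$, $j$-th roots giving the constants $(j!\,D)^{1/j}$ (the paper fixes $D+\varepsilon<2$ where you carry $1+o(1)$ factors), and the same comparison $D^{j-1}<2^{j-1}\le j!$. The only difference is your matching lower bound via $\binom{N}{j}\le N^j/j!$, which the paper omits: it compares only the upper estimates on $n$, whereas you show that the true thresholds $N_m(r)$ are ordered, a slight gain in rigor.
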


\begin{proof}
Put $t=e-m$ and $N=n-m$. Let $0<\varepsilon<2-D$. By hypothesis,  for all sufficiently large $r$ we have that
\[
\binom Nt\leq (D+\varepsilon)q^{tr}.
\]
On the other hand,
\[
\binom Nt
=
\frac{N(N-1)\cdots(N-t+1)}{t!}
\geq
\frac{(N-t+1)^t}{t!}.
\]
Hence $(N-t+1)^t\leq t!(D+\varepsilon)q^{tr}$. Taking $t$-th roots gives $N-t+1\leq \bigl(t!(D+\varepsilon)\bigr)^{1/t}q^r$. Since $N=n-m$ and $t=e-m$, this is
\[
n\leq e-1+
\bigl((e-m)!(D+\varepsilon)\bigr)^{\frac{1}{e-m}}q^r.
\]
Now,  we want to minimize $\bigl((e-m)!(D+\varepsilon)\bigr)^{\frac{1}{e-m}}q^r$ to obtain the tightest bound. We see that that for $m=e-1$ or, equivalently, $e-m=1$ we get   $(D+\varepsilon)q^r$. For $e-m\geq2$, we see that $D<2$ and $0<\varepsilon<2-D$ imply that $(e-m)!\geq 2^{e-m-1}> (D+\varepsilon)^{e-m-1}$, and therefore $\big((e-m)!(D+\varepsilon)\big)^{\frac{1}{e-m}}> D+\varepsilon$.
Thus, for a sufficiently large $r$, the minimum is attained for $m=e-1$, which yields the asymptotically tightest bound. 
\end{proof}

In the following corollary we show that the asymptotically best choice in Proposition \ref{prop:th 6 plus packing bound} is $m=e-1$ due to Lemma \ref{lem:asymptotic-m-choice}, and we recover the bound from \cite{ball2024griesmer}, which is the best bound currently known in the literature.

\begin{corollary}\label{cor:ball_recovered}
Let $\mathcal{C} \subseteq \mathbb{F}_q^{rn}$ be a QMDS linear code of type $[n,r,k,d]$ such that $k=er+r_0$ with $e$ integer and $0< r_0< r$. Then
\[
n\leq e-1+q^r+\frac{q^r-1}{q^{r_0}-1}.
\]
\end{corollary}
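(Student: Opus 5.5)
Our plan is to pass from the code to a family of subspaces, apply Proposition \ref{prop:th 6 plus packing bound} with the single choice $m=e-1$, and then simplify the right-hand side.

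First, since $\mathcal C$ is QMDS with $k=er+r_0$ and $0<r_0<r$, we have $\lceil k/r\rceil = e+1$ and hence $d=n-e$. Because $k>r$, Corollary \ref{cor:faithful-truncation} supplies a faithful family $\mathcal U$ of type $[n,k-r,k,e]$, and its parameter $r=k-s$ is exactly the $r$ of the code. So the hypotheses of Proposition \ref{prop:th 6 plus packing bound} hold. Lemma \ref{lem:asymptotic-m-choice} suggests $m=e-1$ as the natural choice, but we do not need it: every $m$ gives a valid bound, so we simply fix $m=e-1$. Then $\binom{n-m}{e-m}=\binom{n-e+1}{1}=n-e+1$, and $k-mr=r+r_0$. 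The proposition therefore gives
\[
n-e+1\leq \frac{q^{r+r_0}-1}{q^{r_0}-1}.
\]

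Next we simplify the right-hand side by writing
\[
q^{r+r_0}-1=q^r(q^{r_0}-1)+(q^r-1),
\]
which yields
\[
\frac{q^{r+r_0}-1}{q^{r_0}-1}=q^r+\frac{q^r-1}{q^{r_0}-1}.
\]
Rearranging then gives
\[
n\leq e-1+q^r+\frac{q^r-1}{q^{r_0}-1},
\]
which is the claimed bound.

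No step is a real obstacle. The only points to check are that Corollary \ref{cor:faithful-truncation} applies (it needs $k>r$, which holds because $e\geq 1$ and $r_0>0$) and that the intersection parameter is exactly $e$, which follows from the Singleton equality. For the case $e=1$ we have $m=0$, and the same computation goes through unchanged.
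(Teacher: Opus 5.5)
Your proposal is correct and follows essentially the same route as the paper. You obtain a faithful family of type $[n,k-r,k,e]$ from Corollary~\ref{cor:faithful-truncation}, apply Proposition~\ref{prop:th 6 plus packing bound} with $m=e-1$ (so that $\binom{n-e+1}{1}=n-e+1$ and $k-mr=r+r_0$), and expand $\frac{q^{r+r_0}-1}{q^{r_0}-1}=q^r+\frac{q^r-1}{q^{r_0}-1}$; your remark that $e\geq 1$ (hence $k>r$) is needed matches the hypothesis the paper uses tacitly.
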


\begin{proof}
By Corollary \ref{cor:faithful-truncation}, there exists a family of subspaces of type $[n,k-r,k,e]$. If we set $m=e-1$, then we see that $\binom{n-m}{e-m}=\binom{n-e+1}{1}=n-e+1$ and $k-mr=(er+r_0)-(e-1)r=r+r_0$. Thus by Proposition \ref{prop:th 6 plus packing bound},  we obtain that
\[
n-e+1\leq\left\lfloor \frac{q^{r+r_0}-1}{q^{r_0}-1}\right\rfloor\leq \frac{q^{r+r_0}-1}{q^{r_0}-1}
=
q^r+\frac{q^r-1}{q^{r_0}-1}.
\]
And the result follows.
\end{proof}

\begin{remark} Let $k,r$ and $e$ be integers such that  $k=er+r_0$ with $0< r_0< r$. Let $m$ and $a$ be integers such that $k-mr=ar_0+\rho$ with $0\leq\rho<r_0$, then we have that
\[
\binom{n-m}{e-m}\leq \left\lfloor \frac{q^{k-mr}-1}{q^{r_0}-1}\right\rfloor=\left\lfloor \frac{q^{\rho}\left(q^{ar_0}-1\right)}{q^{r_0}-1}\right\rfloor=\frac{q^{\rho}\left(q^{ar_0}-1\right)}{q^{r_0}-1}= \frac{q^{k-mr}-q^{\rho}}{q^{r_0}-1}.
\]

Hence, we see that in Corollary \ref{cor:ball_recovered} we have deliberately weakened the packing bound a little in order to  recover the exact result from \cite{ball2024griesmer}. Obviously, we could already obtain an improved version, however we will be able to improve it even more using stronger partial spread results.

\end{remark}

\section{Tighter bounds on lengths of QMDS codes}
We replace in Theorem \ref{thm:general_reduction_partial_spreads} the packing bound (Proposition \ref{packing bound}) by sharper  estimates for partial spreads. This yields improved upper bounds for the length of QMDS codes in several parameter regimes.

\subsection{A general bound using Drake--Freeman}
We first apply the Drake--Freeman bound for partial spreads (Theorem \ref{thm:drake-freeman}). Since Theorem \ref{thm:general_reduction_partial_spreads} gives one inequality for each $0\leq m\leq e-1$, this produces a family of upper bounds. We then show that, among these Drake--Freeman bounds, the strongest one is obtained by taking $m=e-1$.
\begin{proposition}\label{prop:drake-freeman-general}
Let $\mathcal{C}\subseteq \mathbb{F}_q^{rn}$ be a QMDS linear code of type $[n,r,k,d]$ such that $k=er+r_0$ with $e$ integer and $0< r_0< r$.
For every integer $m$ with $0\leq m\leq e-1$, if $k-mr=a_mr_0+\rho_m$ with $a_m\in\mathbb{N}$ and $0< \rho_m<r_0$, then
\[
\binom{n-m}{e-m}
\leq
\frac{q^{k-mr}-q^{\rho_m}}{q^{r_0}-1}
-\lfloor \omega_m\rfloor-1,
\]
where
\[
2\omega_m=
\sqrt{1+4q^{r_0}(q^{r_0}-q^{\rho_m})}
-\left(2q^{r_0}-2q^{\rho_m}+1\right).
\]

\end{proposition}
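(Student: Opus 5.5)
The plan is to chain three earlier results: Corollary \ref{cor:faithful-truncation}, then Theorem \ref{thm:general_reduction_partial_spreads}, and finally the Drake--Freeman bound of Theorem \ref{thm:drake-freeman}.

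\emph{Step 1 (passing to a family of subspaces).} Since $0<r_0<r$ and $e\geq 1$, we have $k=er+r_0>r$. The code is QMDS, so $d=n-\lceil k/r\rceil+1=n-e$. Corollary \ref{cor:faithful-truncation} therefore gives a faithful family $\mathcal U=(U_i)_{i=1}^n$ of type $[n,k-r,k,e]$. Put $s=k-r$. Then $r=k-s$ and $k=er+r_0$, so the hypotheses of Theorem \ref{thm:general_reduction_partial_spreads} hold. Applying it with the given $m$, $0\leq m\leq e-1$, yields
\[
\binom{n-m}{e-m}\leq \mu_q(k-mr,r_0).
\]

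\emph{Step 2 (bounding $\mu_q$).} We apply Theorem \ref{thm:drake-freeman} with ambient dimension $k-mr$ in place of $k$ and $r_0$ in place of $s$. This requires $r_0<k-mr$. Indeed $m\leq e-1$ gives $k-mr\geq r+r_0>r_0$. We also write $k-mr=a_mr_0+\rho_m$ with $0<\rho_m<r_0$, exactly as in the statement; the hypothesis $\rho_m\neq 0$ is precisely what Theorem \ref{thm:drake-freeman} needs. Substituting $k\mapsto k-mr$, $s\mapsto r_0$ and $s_0\mapsto\rho_m$ turns its conclusion into
\[
\mu_q(k-mr,r_0)\leq \frac{q^{k-mr}-q^{\rho_m}}{q^{r_0}-1}-\lfloor\omega_m\rfloor-1,
\]
where $\omega_m$ is given by the stated formula. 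Combining this with Step 1 gives the claim.

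I do not expect a real obstacle. The only points needing care are the following:
\begin{itemize}
\item verifying $r_0<k-mr$ for every admissible $m$, so that Drake--Freeman applies;
\item observing that $a_m\geq 1$ follows automatically;
\item checking that the parameter renaming in Theorem \ref{thm:drake-freeman} is consistent.
\end{itemize}
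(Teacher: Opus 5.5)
Your proposal is correct and follows the paper's proof: Corollary \ref{cor:faithful-truncation} gives the family, Theorem \ref{thm:general_reduction_partial_spreads} gives $\binom{n-m}{e-m}\leq\mu_q(k-mr,r_0)$, and Theorem \ref{thm:drake-freeman} with $\rho_m>0$ finishes. You also spell out the side conditions the paper leaves implicit. These are $e\geq 1$, which is needed for $k>r$ and holds whenever an admissible $m$ exists, and $k-mr\geq r+r_0>r_0$.
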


\begin{proof}
By Corollary \ref{cor:faithful-truncation}, there exists a family of subspaces of type $[n,k-r,k,n-d]$ and by Theorem \ref{thm:general_reduction_partial_spreads}, for every integer $m$ with $0\leq m\leq e-1$, we have
\[
\binom{n-m}{e-m}\leq \mu_q(k-mr,r_0).
\]
Now write $k-mr=a_mr_0+\rho_m
$ with $0\leq \rho_m<r_0$.
If $\rho_m>0$, then Theorem \ref{thm:drake-freeman} (Drake--Freeman) applies to the partial $r_0$-spreads in ambient dimension $k-mr$, and yields
\[
\mu_q(k-mr,r_0)
\leq
\frac{q^{k-mr}-q^{\rho_m}}{q^{r_0}-1}
-\lfloor \omega_m\rfloor-1,
\]
where
\[
2\omega_m=
\sqrt{1+4q^{r_0}(q^{r_0}-q^{\rho_m})}
-\left(2q^{r_0}-2q^{\rho_m}+1\right).
\]
Combining both inequalities gives the desired result.
\end{proof}

\begin{remark}  
Observe that Proposition \ref{prop:drake-freeman-general} excludes the case
$\rho_m=0$, since Theorem \ref{thm:drake-freeman} only applies to nonzero
remainders. However, this exceptional case is simple, we will study it now and exclude it for future results.
Indeed,  $\rho_m=0$ is equivalent to $r_0\mid (k-mr)$, which implies that the ambient space $\mathbb F_q^{k-mr}$ admits an $r_0$--spread and therefore the packing bound yields the exact value of $\mu_q(k-mr,r_0)$.  Then we see that Theorem \ref{thm:general_reduction_partial_spreads} yields
    \[
\binom{n-m}{e-m}\leq \mu_q(k-mr,r_0)=\frac{q^{k-mr}-1}{q^{r_0}-1}.
\]
Using  Lemma \ref{lem:asymptotic-m-choice}, we can asymptotically estimate that this bound is optimized for the minimal $e-m$. Since $k=er+r_0$, we see that $r_0\mid(k-mr)$ is equivalent to $r_0\mid(e-m)r$. Thus, the minimal $e-m$ can be computed  as $m=e-\frac{r_0}{\gcd(r,r_0)}$. In particular, if $r_0\mid r$, then the exceptional case occurs for
$m=e-1$, giving
\[
n\leq e-1+\frac{q^{r+r_0}-1}{q^{r_0}-1}
=
e-1+q^r+\frac{q^r-1}{q^{r_0}-1},
\]
the bound from \cite{ball2024griesmer}. 
\end{remark}

Our next step is studying which value of $m$ produces the best bound for Proposition \ref{prop:drake-freeman-general}.  We start with a technical lemma.

\begin{lemma}\label{lem:compare-drake-freeman-bounds}
 In the notation of Proposition \ref{prop:drake-freeman-general}, let $B_m:=
\frac{q^{k-mr}-q^{\rho_m}}{q^{r_0}-1}
-\lfloor \omega_m\rfloor-1$. Then
\[
\binom{B_{e-1}+e-m-1}{e-m}\leq B_m.
\]
\end{lemma}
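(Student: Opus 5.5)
We prove the statement by induction on $t:=e-m\in\{1,\ldots,e\}$, i.e.\ downward induction on $m$ starting at $m=e-1$, writing $x:=B_{e-1}$. For $t=1$ the left-hand side is $\binom{x}{1}=x=B_{e-1}$, so equality holds. For the inductive step from $m$ to $m-1$ (that is, from $t$ to $t+1$) we use
\[
\binom{x+t}{t+1}=\binom{x+t-1}{t}\cdot\frac{x+t}{t+1}\leq B_m\cdot\frac{x+t}{t+1},
\]
so it suffices to show $B_m\cdot\frac{x+t}{t+1}\leq B_{m-1}$. Since $x\geq 1$, the map $t\mapsto \frac{x+t}{t+1}$ is non-increasing in $t$. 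Hence $\frac{x+t}{t+1}\leq \frac{x+1}{2}$, and the step reduces to the single inequality
\[
\frac{x+1}{2}\,B_m\leq B_{m-1}\qquad (1\leq m\leq e-1).
\]

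To prove this we split $B_m=P_m-\lfloor\omega_m\rfloor-1$, where $P_m:=\frac{q^{k-mr}-q^{\rho_m}}{q^{r_0}-1}$. Note that $\rho_m$ is the residue of $(e-m)r$ modulo $r_0$, and that $k-(m-1)r=(k-mr)+r$. A direct computation gives
\[
P_{m-1}=q^rP_m+\frac{q^{r+\rho_m}-q^{\rho_{m-1}}}{q^{r_0}-1},
\]
so $B_{m-1}\geq q^r P_m-\lfloor\omega_{m-1}\rfloor-1$ up to a nonnegative extra term. On the other side, $x\leq P_{e-1}\leq q^r+\frac{q^r-1}{q^{r_0}-1}\leq 2q^r-1$, which gives $\frac{x+1}{2}\leq q^r$. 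More precisely, we keep the deficit
\[
q^r-\frac{x+1}{2}\geq \frac{q^r(q^{r_0}-2)+1}{2(q^{r_0}-1)}+\frac{\lfloor\omega_{e-1}\rfloor+1}{2}.
\]
This deficit multiplies $B_m$, which is of order $q^{(e-m)r}$, so it yields a large amount of slack.

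The remaining, and main, obstacle is the bookkeeping with the correction terms $\lfloor\omega_j\rfloor+1$. We must check that
\[
\Bigl(q^r-\tfrac{x+1}{2}\Bigr)B_m+\frac{q^{r+\rho_m}-q^{\rho_{m-1}}}{q^{r_0}-1}+q^r(\lfloor\omega_m\rfloor+1)\geq \lfloor\omega_{m-1}\rfloor+1 .
\]
For this we bound $\omega_j$ from the defining formula. Rationalizing $\sqrt{1+4q^{r_0}(q^{r_0}-q^{\rho})}$ against $2q^{r_0}-2q^{\rho}+1$ shows that $\omega_j\leq \frac{q^{\rho_j}-1}{2}<q^{r_0}$. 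The term $q^r(\lfloor\omega_m\rfloor+1)\geq q^r$ alone then dominates $\lfloor\omega_{m-1}\rfloor+1\leq q^{r_0}<q^r$, which closes the induction.

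If the case $q^{r_0}=2$ (that is, $q=2$, $r_0=1$) makes the first deficit term vanish, the argument still works, because that term was not needed. Case $\rho_j=0$ does not arise here, since Proposition~\ref{prop:drake-freeman-general} assumes $\rho_m>0$ for every $m$.
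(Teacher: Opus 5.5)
Your argument is correct in substance, but it takes a different route from the paper. The paper does not use induction. It bounds both sides against $q^{(e-m)r}$:
\begin{itemize}
\item $q^{r_0}\ge 4$ gives $B_{e-1}<\tfrac43 q^r$, and then $q^r\ge 8$ in the product expansion gives $\binom{B_{e-1}+e-m-1}{e-m}\le q^{(e-m)r}$;
\item $\lfloor\omega_m\rfloor+1\le q^{\rho_m}-1$ gives $B_m\ge q^{(e-m)r}$.
\end{itemize}
You instead prove the one-step ratio inequality $\frac{x+1}{2}B_m\le B_{m-1}$. For this you use the exact recursion $P_{m-1}=q^rP_m+\frac{q^{r+\rho_m}-q^{\rho_{m-1}}}{q^{r_0}-1}$, the crude bound $x\le 2q^r-1$, and the sharper estimate $0\le\omega_j<\frac{q^{\rho_j}-1}{2}$ obtained by rationalizing. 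Your route avoids the numerical thresholds $q^{r_0}\ge 4$ and $q^r\ge 8$, and it shows how the $B_j$ grow multiplicatively. The paper's route only ever touches $B_{e-1}$ and $B_m$, which matters for the first point below.

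There are two things to fix.

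First, your closing claim that $\rho_j=0$ cannot occur misreads Proposition~\ref{prop:drake-freeman-general}. That proposition is a separate conditional statement for each $m$. Theorem~\ref{thm:best-drake-freeman-bound} applies the lemma to every admissible $m$, so only $\rho_{e-1}>0$ and $\rho_m>0$ are guaranteed. For instance, $r_0=2$, $r=3$, $e=3$, $m=0$ gives $\rho_2=\rho_0=1$ but $\rho_1=0$. Your induction therefore passes through an index where $B_1$ is not a Drake--Freeman quantity.

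The repair is easy. The lemma is purely arithmetic, and the defining formula still makes sense for $\rho_j=0$: it gives $\omega_j=0$ and $B_j=\frac{q^{k-jr}-1}{q^{r_0}-1}-1$. None of your estimates uses $\rho_j>0$, so run the induction on these auxiliary integers.

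Second, your ``more precisely'' display is false. In fact
\[
q^r-\tfrac{x+1}{2}=\frac{q^r(q^{r_0}-2)+q^{\rho}}{2(q^{r_0}-1)}+\frac{\lfloor\omega_{e-1}\rfloor}{2}.
\]
For example, $q=2$, $r_0=2$, $r=3$ gives $3$ on the left, versus your claimed lower bound $10/3$. This is harmless, since you only use $q^r-\frac{x+1}{2}\ge 0$, but you should drop or correct it.

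Also state explicitly that $\omega_j\ge 0$ and $x\ge 1$. Both follow from your rationalized expression for $\omega_j$ together with $P_{e-1}\ge q^r$.
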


\begin{proof}
If $m=e-1$, the result is trivial. Assume $m\leq e-2$. We structure the proof in the following steps.
\medskip

\noindent
\textbf{Step 1:} 
 $0<\rho_{m}<r_0$ implies that $r_0\geq 2$ and $q^{r_0}\geq 4$. Removing all negative terms in $B_{e-1}$, we see that
\[
B_{e-1}
<
\frac{q^{r+r_0}}{q^{r_0}-1}
=
\frac{q^{r_0}}{q^{r_0}-1}q^r\leq \frac{4}{3}q^r.
\]

\medskip

\noindent
\textbf{Step 2:} Let $Q=q^{r_0}$ and $b=q^{\rho_m}$. Since $0<\rho_m<r_0$, we have $b>1$ and
\[
2\omega_m=\sqrt{1+4Q^2-4Qb}-(2Q-2b+1)<(2Q-1)-(2Q-2b+1)=2b-2.
\]
Thus $\omega_m<b-1$ and, because $b$ is an integer, we have that $\lfloor \omega_m\rfloor\leq b-2$. Hence
\[
\lfloor \omega_m\rfloor+1\leq q^{\rho_m}-1.
\]
Therefore, because $k-mr=(e-m)r+r_0$, we see that
\[
B_m
=
\frac{q^{k-mr}-q^{\rho_m}}{q^{r_0}-1}
-\lfloor \omega_m\rfloor-1
\geq
\frac{q^{(e-m)r+r_0}-q^{\rho_m}}{q^{r_0}-1}
-q^{\rho_m}+1=q^{(e-m)r}+\frac{q^{(e-m)r}-q^{r_0+\rho_m}}{q^{r_0}-1}+1.
\]
Now, $\rho_m$ being the  remainder of $k-mr$ modulo $r_0$ implies that $\rho_m$ is also the remainder of $(e-m)r=k-mr-r_0$ modulo $r_0$.  Moreover, since $(e-m)r\ge r>r_0$ we have that $(e-m)r\ge r_0+\rho_m$, which implies that $q^{(e-m)r}\ge q^{r_0+\rho_m}$. Thus
\[
B_m\geq q^{(e-m)r}.
\]

\medskip

\noindent
\textbf{Step 3:} using the product expansion of the binomial coefficient and noting that $e-m\geq 2$, we may write
\[
\binom{B_{e-1}+e-m-1}{e-m}
=
\frac{B_{e-1}(B_{e-1}+1)}{2}
\prod_{j=3}^{e-m}\frac{B_{e-1}+j-1}{j}.
\]
First, $0<\rho_m<r_0<r$ implies that $r\geq3$ and  $q^r\geq 8$, which is enough to conclude that for every $j$ such that $3\leq j\leq e-m$ we have that
\[
\frac{B_{e-1}+j-1}{j}
<
\frac{\frac{4}{3}q^r+j-1}{j}\leq q^r.
\]
Second,  since $B_{e-1}<\frac{4}{3}q^r$, we obtain
\[
\frac{B_{e-1}(B_{e-1}+1)}{2}
<
\frac{1}{2}\cdot \frac{4}{3}q^r\left(\frac{4}{3}q^r+1\right)
\leq q^{2r}.
\]
Combining the previous inequalities, we get
\[
\binom{B_{e-1}+e-m-1}{e-m}
\leq
q^{2r}\,q^{(e-m-2)r}
=
q^{(e-m)r}\leq B_m.
\]

\end{proof}

\begin{theorem}\label{thm:best-drake-freeman-bound}
Let $\mathcal{C}\subseteq \mathbb{F}_q^{rn}$ be a QMDS linear code of type $[n,r,k,d]$ such that $k=er+r_0$ with $e$ integer and $0< r_0< r$. Let $r+r_0=ar_0+\rho$ with $a$ integer and $0< \rho<r_0$.  Then, among all the bounds obtained from Proposition
\ref{prop:drake-freeman-general}, the tightest one is the one corresponding to $m=e-1$, namely
\[
n\leq e
-\lfloor \omega\rfloor-2+q^r+\frac{q^{r}-q^\rho}{q^{r_0}-1},
\]
where
\[
2\omega=
\sqrt{1+4q^{r_0}(q^{r_0}-q^\rho)}
-\left(2q^{r_0}-2q^\rho+1\right).
\]
\end{theorem}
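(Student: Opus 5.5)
The proof splits into two parts. First we evaluate the bound of Proposition \ref{prop:drake-freeman-general} at $m=e-1$. Then we use Lemma \ref{lem:compare-drake-freeman-bounds} to show that no other $m$ yields a stronger restriction on $n$.

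\textbf{Step 1: the case $m=e-1$.} We have $\binom{n-e+1}{1}=n-e+1$ and $k-(e-1)r=r+r_0$, so $a_{e-1}=a$ and $\rho_{e-1}=\rho$, which is nonzero by hypothesis. Proposition \ref{prop:drake-freeman-general} then gives $n-e+1\leq B_{e-1}$ with
\[
B_{e-1}=\frac{q^{r+r_0}-q^{\rho}}{q^{r_0}-1}-\lfloor\omega\rfloor-1 .
\]
Rewrite the fraction as
\[
\frac{q^{r+r_0}-q^\rho}{q^{r_0}-1}=\frac{q^r(q^{r_0}-1)+q^r-q^\rho}{q^{r_0}-1}=q^r+\frac{q^r-q^\rho}{q^{r_0}-1}.
\]
Substituting, $n\leq B_{e-1}+e-1$ becomes exactly the displayed inequality $n\leq e-\lfloor\omega\rfloor-2+q^r+\frac{q^r-q^\rho}{q^{r_0}-1}$.

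\textbf{Step 2: comparison with other $m$.} We must show that for every $m\leq e-2$ (with $\rho_m>0$, so that the bound is defined) the $m$-th inequality $\binom{n-m}{e-m}\leq B_m$ is implied by the $m=e-1$ inequality $n\leq B_{e-1}+e-1$. Suppose $n$ satisfies $n\leq B_{e-1}+e-1$. Then $n-m\leq B_{e-1}+e-m-1$. Since $N\mapsto\binom{N}{e-m}$ is nondecreasing for $N\geq 0$, and Lemma \ref{lem:compare-drake-freeman-bounds} gives $\binom{B_{e-1}+e-m-1}{e-m}\leq B_m$, we obtain $\binom{n-m}{e-m}\leq B_m$. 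So every $n$ permitted by the $m=e-1$ bound is permitted by all the others, and the $m=e-1$ bound is the tightest.

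The main subtlety is that Lemma \ref{lem:compare-drake-freeman-bounds} itself needs $0<\rho_m<r_0$ for some $m\leq e-2$. For indices with $\rho_m=0$, Proposition \ref{prop:drake-freeman-general} yields no bound, so there is nothing to compare. We also need $0<\rho<r_0$ (the hypothesis here) so that $B_{e-1}$ is well defined and $\lfloor\omega\rfloor+1\leq q^\rho-1$ holds. That inequality was used only inside the lemma, which we take as given. Apart from this bookkeeping, the argument is monotonicity plus the lemma.
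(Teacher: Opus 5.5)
Your proposal is correct and is essentially the paper's proof. You specialize Proposition \ref{prop:drake-freeman-general} to $m=e-1$ (where $k-(e-1)r=r+r_0$, so $\rho_{e-1}=\rho$) and rearrange; you then show that $n\leq B_{e-1}+e-1$ forces $\binom{n-m}{e-m}\leq B_m$ for every other admissible $m$, using monotonicity of $N\mapsto\binom{N}{e-m}$ and Lemma \ref{lem:compare-drake-freeman-bounds}, which is exactly the paper's inclusion $S_{e-1}\subseteq S_m$. One minor inaccuracy in your side remark: inside the lemma, the estimate $\lfloor\omega_m\rfloor+1\leq q^{\rho_m}-1$ is applied to the indices $m\leq e-2$ to bound $B_m$ from below, whereas $B_{e-1}$ is only bounded above by dropping its negative terms. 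Since you take the lemma as given, this does not affect your argument.
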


\begin{proof}
 Taking $m=e-1$ in Proposition \ref{prop:drake-freeman-general} and rearranging the inequality yields
 \[
n\leq
\frac{q^{r+r_0}-q^\rho}{q^{r_0}-1}+e
-\lfloor \omega\rfloor-2=e
-\lfloor \omega\rfloor-2+q^r+\frac{q^{r}-q^\rho}{q^{r_0}-1}.
\] 
 Now, let $B_m:=
\frac{q^{k-mr}-q^{\rho_m}}{q^{r_0}-1}
-\lfloor \omega_m\rfloor-1$ and $S_{m}=\{n\in\mathbb{N}:\binom{n-m}{e-m}\leq B_{m}\}$. We will show that $S_{e-1}\subseteq S_{m}$.   We see that the inequality from $S_{e-1}$ implies that $n\leq B_{e-1}+e-1$. Thus,  for every admissible $m$ and every $n \in S_{e-1}$ we have  that
\[
n-m\leq B_{e-1}+e-m-1.
\]
Since Pascal's triangle is vertically increasing, we obtain that
\[
\binom{n-m}{e-m}
\leq
\binom{B_{e-1}+e-m-1}{e-m}.
\]
By Lemma \ref{lem:compare-drake-freeman-bounds} this implies that
\[
\binom{n-m}{e-m}
\leq\binom{B_{e-1}+e-m-1}{e-m}\leq B_m.
\]
Thus every value of $n$ satisfying the inequality from $S_{e-1}$ will also satisfy the inequality from $S_m$,  for every admissible $m$ we have. that is, $ S_{e-1} \subseteq S_m $.   Therefore, the tightest bound arising from
Proposition \ref{prop:drake-freeman-general} is the one obtained for $m=e-1$.
\end{proof}
\begin{remark}
 Since $r\equiv \rho \pmod{r_0}$, we have that 
\[
\left\lfloor \frac{q^r-1}{q^{r_0}-1}\right\rfloor
=\left\lfloor\frac{q^{r}-q^{\rho}}{q^{r_0}-1}+ \frac{q^{\rho}-1}{q^{r_0}-1}\right\rfloor=
\frac{q^r-q^\rho}{q^{r_0}-1}.
\]
This allows us to easily compare  Theorem \ref{thm:best-drake-freeman-bound}, the bound coming from Drake--Freeman with the integer part of Corollary \ref{cor:ball_recovered}, our starting point \cite{ball2024griesmer} that we derived from the packing bound:
    \[
n\leq  e
-\lfloor \omega\rfloor-2+q^r+\frac{q^{r}-q^\rho}{q^{r_0}-1}= e
-1+q^r+\left\lfloor\frac{q^{r}-1}{q^{r_0}-1}\right\rfloor-\left(\lfloor \omega\rfloor+1\right).
\]
Thus, if one takes the integer form of the bound coming from the packing bound,
Theorem \ref{thm:best-drake-freeman-bound} improves it by exactly
\[
\Delta_{\mathrm{DF}}^{\mathbb Z}:=\lfloor \omega\rfloor+1.
\]
The following table gives this integer improvement for small values of
$q,r_0$ and $\rho$.
\[
\begin{array}{c|ccccc}
\multicolumn{6}{c}{q=2}\\
r_0\backslash \rho & 1 & 2 & 3 & 4 & 5\\
\hline
2 & 1 &   &   &   &   \\
3 & 1 & 2 &   &   &   \\
4 & 1 & 2 & 3 &   &   \\
5 & 1 & 2 & 4 & 7 &   \\
6 & 1 & 2 & 4 & 7 & 13
\end{array}
\qquad
\begin{array}{c|ccccc}
\multicolumn{6}{c}{q=3}\\
r_0\backslash \rho & 1 & 2 & 3 & 4 & 5\\
\hline
2 & 1 &    &    &    &    \\
3 & 1 & 4  &    &    &    \\
4 & 1 & 4  & 12 &    &    \\
5 & 1 & 4  & 13 & 36 &    \\
6 & 1 & 4  & 13 & 39 & 109
\end{array}
\]
In particular, the improvement is small when $\rho$ is small, but becomes
substantial as $\rho$ approaches $r_0$.
\end{remark}

\subsection{A particular  bound using Năstase--Sissokho}

We now combine Theorem \ref{thm:general_reduction_partial_spreads} with the exact result of Năstase--Sissokho (Theorem \ref{thm:Năstase--Sissokho}) for partial spreads. This yields a family of upper bounds for QMDS codes whenever the corresponding divisibility remainder is sufficiently small.

\begin{proposition}\label{pro:nastase-general}
Let $\mathcal{C}\subseteq \mathbb{F}_q^{rn}$ be a QMDS linear code of type $[n,r,k,d]$ such that $k=er+r_0$ with $e$ integer and $0< r_0< r$ and let $m$ be an integer with $0\leq m\leq e-1$. Write $k-mr=a_mr_0+\rho_m$ with $a_m$ integer and $0< \rho_m<r_0$. If $r_0>\frac{q^{\rho_m}-1}{q-1}$, then
\[
\binom{n-m}{e-m}\leq \frac{q^{k-mr}-q^{r_0+\rho_m}}{q^{r_0}-1}+1.
\]
\end{proposition}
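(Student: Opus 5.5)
The plan mirrors the proof of Proposition \ref{prop:drake-freeman-general}, replacing the Drake--Freeman estimate with the exact Năstase--Sissokho value.

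First, by Corollary \ref{cor:faithful-truncation}, the QMDS code $\mathcal C$ of type $[n,r,k,d]$ yields a faithful family $\mathcal U$ of subspaces of type $[n,k-r,k,n-d]$. Since $\mathcal C$ is QMDS, we have $n-d=\lceil k/r\rceil-1=e$, because $k=er+r_0$ with $0<r_0<r$. The family is therefore of type $[n,s,k,e]$ with $r=k-s$. Theorem \ref{thm:general_reduction_partial_spreads} then gives, for the fixed $m$ with $0\le m\le e-1$,
\[
\binom{n-m}{e-m}\leq \mu_q(k-mr,r_0).
\]

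Next, I would evaluate $\mu_q(k-mr,r_0)$ using Theorem \ref{thm:Năstase--Sissokho}, applied with ambient dimension $k-mr$, spread dimension $s=r_0$, quotient $a=a_m$ and remainder $s_0=\rho_m$. The hypotheses of that theorem must be checked.

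\begin{itemize}
\item[(i)] The spread dimension is smaller than the ambient dimension, i.e.\ $r_0<k-mr$. This holds because $k-mr=(e-m)r+r_0\ge r+r_0>r_0$.
\item[(ii)] The remainder lies strictly between $0$ and $r_0$. This is the assumption $0<\rho_m<r_0$.
\item[(iii)] The size condition $r_0>\frac{q^{\rho_m}-1}{q-1}$. This is exactly the hypothesis of the proposition.
\end{itemize}

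With these verified, the theorem gives
\[
\mu_q(k-mr,r_0)=\frac{q^{k-mr}-q^{r_0+\rho_m}}{q^{r_0}-1}+1.
\]
Substituting this into the previous inequality yields the claimed bound.

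There is no real obstacle here. The only point needing care is the dictionary between the notation of Theorem \ref{thm:Năstase--Sissokho} and the present setting ($k\mapsto k-mr$, $s\mapsto r_0$, $s_0\mapsto\rho_m$). In particular, one should confirm that the exponent $s+s_0$ in that theorem becomes $r_0+\rho_m$, and that the integer quotient $a_m$ is at least $1$, which follows from (i).
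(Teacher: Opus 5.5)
Your proposal is correct and follows the paper's proof: Corollary \ref{cor:faithful-truncation} together with Theorem \ref{thm:general_reduction_partial_spreads} gives $\binom{n-m}{e-m}\le\mu_q(k-mr,r_0)$. The Năstase--Sissokho theorem, applied with $(k,s,s_0)\mapsto(k-mr,r_0,\rho_m)$, then evaluates $\mu_q(k-mr,r_0)$. Your explicit checks that $n-d=e$ and that the hypotheses of that theorem hold are left implicit in the paper, and they are correct.
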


\begin{proof}
By Corollary \ref{cor:faithful-truncation}, there exists a family of subspaces of type $[n,k-r,k,n-d]$ and by Theorem \ref{thm:general_reduction_partial_spreads}, for every integer $m$ with $0\leq m\leq e-1$, we have
\[
\binom{n-m}{e-m}\leq \mu_q(k-mr,r_0).
\]
Now write $k-mr=a_mr_0+\rho_m$ with $0\leq \rho_m<r_0$. If $r_0>\frac{q^{\rho_m}-1}{q-1}$
then Theorem \ref{thm:Năstase--Sissokho} applies to partial $r_0$-spreads in ambient dimension $k-mr$, and yields
\[
\mu_q(k-mr,r_0)=\frac{q^{k-mr}-q^{r_0+\rho_m}}{q^{r_0}-1}+1.
\]
Combining both inequalities gives the result.
\end{proof}

\begin{remark}
The condition
\[
r_0>\frac{q^{\rho_m}-1}{q-1}
\]
is especially mild when $\rho_m$ is small. For instance, if $\rho_m=1$, then it reduces to $r_0>1$, which is automatic in our setting. Thus, whenever one can choose $m$ so that
\[
k-mr\equiv 1 \pmod{r_0},
\]
Theorem \ref{thm:Năstase--Sissokho} yields an exact value for $\mu_q(k-mr,r_0)$ and therefore Proposition \ref{pro:nastase-general} yields a particularly sharp upper bound using our reduction to partial spreads.
\end{remark}

\begin{remark}
Since $k=er+r_0$, the bound in Proposition
\ref{pro:nastase-general} can be written as
\[
\binom{n-m}{e-m}
\leq
\frac{q^{(e-m)r+r_0}-q^{r_0+\rho_m}}{q^{r_0}-1}+1
=
Dq^{(e-m)r}-Dq^{\rho_m}+1,
\]
 with $D=\frac{q^{r_0}}{q^{r_0}-1}$. Since $0<\rho_m<r_0<r$, the last two terms are of lower order with respect to
$q^{(e-m)r}$. Hence, for every admissible value of $m$ to which Proposition
\ref{pro:nastase-general} applies, the leading coefficient is the same, namely
\[
D=\frac{q^{r_0}}{q^{r_0}-1}<2.
\]
Therefore, by Lemma \ref{lem:asymptotic-m-choice}, the strongest asymptotic
choice is the largest admissible value of $m$. In particular, if the
Năstase--Sissokho condition holds for $m=e-1$, then the tightest asymptotic
choice is precisely $m=e-1$. This motivates the next result.
\end{remark}

\begin{theorem}\label{thm:nastase-me1}
Let $\mathcal{C}\subseteq \mathbb{F}_q^{rn}$ be a QMDS  linear code of type $[n,r,k,d]$ such that $k=er+r_0$ with $e$ integer and $0<r_0<r$. Write $r+r_0=ar_0+\rho$ with $a$ integer and $0< \rho<r_0$. If $r_0>\frac{q^\rho-1}{q-1}$, then

\[
n\leq e+q^r-q^{\rho}+\frac{q^{r}-q^{\rho}}{q^{r_0}-1}.
\]
\end{theorem}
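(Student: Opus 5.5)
The plan is to specialize Proposition \ref{pro:nastase-general} to $m=e-1$ and then rearrange. With $m=e-1$ we have $\binom{n-m}{e-m}=\binom{n-e+1}{1}=n-e+1$ and $k-mr=(er+r_0)-(e-1)r=r+r_0$. So the division of $k-mr$ by $r_0$ is exactly the decomposition $r+r_0=ar_0+\rho$ in the statement. Hence $a_{e-1}=a$ and $\rho_{e-1}=\rho$, with $0<\rho<r_0$ as required.

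The hypothesis $r_0>\frac{q^\rho-1}{q-1}$ is precisely the Năstase--Sissokho condition of Proposition \ref{pro:nastase-general} for $m=e-1$. That proposition (via Corollary \ref{cor:faithful-truncation}, Theorem \ref{thm:general_reduction_partial_spreads} and Theorem \ref{thm:Năstase--Sissokho}) gives
\[
n-e+1\leq \frac{q^{r+r_0}-q^{r_0+\rho}}{q^{r_0}-1}+1 .
\]

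It remains to rewrite the right-hand side. Since $q^{r_0}=(q^{r_0}-1)+1$, we get
\[
\frac{q^{r_0}(q^r-q^\rho)}{q^{r_0}-1}=(q^r-q^\rho)+\frac{q^r-q^\rho}{q^{r_0}-1}.
\]
Therefore $n\leq e+q^r-q^\rho+\frac{q^r-q^\rho}{q^{r_0}-1}$, which is the claimed bound.

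There is no genuine obstacle here. The only point needing care is identifying $\rho_{e-1}$ with $\rho$ and noting that $\rho>0$, so that we are in the non-divisible case where Theorem \ref{thm:Năstase--Sissokho} applies. One may add that $\frac{q^r-q^\rho}{q^{r_0}-1}$ is an integer, because $r\equiv\rho \pmod{r_0}$. The statement itself only asserts the bound for $m=e-1$. The asymptotic optimality of this choice is supplied by the preceding remark and Lemma \ref{lem:asymptotic-m-choice}, and does not need to be reproved.
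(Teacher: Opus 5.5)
Your proof is correct and is exactly the paper's argument: take $m=e-1$ in Proposition \ref{pro:nastase-general}, note that $k-(e-1)r=r+r_0$ so that $\rho_{e-1}=\rho$, and rearrange. Your rewriting via $q^{r_0}=(q^{r_0}-1)+1$ is the same algebra the paper does through the intermediate form $e+q^r+\frac{q^r-q^{r_0+\rho}}{q^{r_0}-1}$, just grouped differently.
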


\begin{proof}
It suffices to take $m=e-1$ in Proposition \ref{pro:nastase-general} and rearrange the inequality:
\[
n\leq e+\frac{q^{r+r_0}-q^{r_0+\rho}}{q^{r_0}-1}=e+q^r+\frac{q^{r}-q^{r_0+\rho}}{q^{r_0}-1}=e+q^r-q^{\rho}+\frac{q^{r}-q^{\rho}}{q^{r_0}-1}.
\]
\end{proof}
\begin{remark}
Since $r\equiv \rho \pmod{r_0}$, we can easily compare  Theorem \ref{thm:nastase-me1}, the bound coming from Năstase--Sissokho with the integer part of Corollary \ref{cor:ball_recovered}, our starting point \cite{ball2024griesmer} that we derived from the packing bound:
\[
n\leq  e+q^r-q^{\rho}+\frac{q^{r}-q^{\rho}}{q^{r_0}-1}=e
-1+q^r+\left\lfloor\frac{q^{r}-1}{q^{r_0}-1}\right\rfloor-\left(q^{\rho}-1\right).
\]
Thus,  if one takes the integer form of the bound derived from the packing bound, 
Theorem \ref{thm:nastase-me1} improves this integer bound by
\[
\Delta_{\mathrm{NS}}^{\mathbb Z}=q^\rho-1.
\]
On the other hand, Theorem \ref{thm:best-drake-freeman-bound} improves the same
integer packing bound by
\[
\Delta_{\mathrm{DF}}^{\mathbb Z}=\lfloor \omega\rfloor+1.
\]
In the proof of Lemma \ref{lem:compare-drake-freeman-bounds} we showed that
$\lfloor\omega\rfloor\leq q^\rho-2$. Hence
\[
\Delta_{\mathrm{NS}}^{\mathbb Z}\geq \Delta_{\mathrm{DF}}^{\mathbb Z},
\]
so the Năstase--Sissokho bound is at least as strong as the Drake--Freeman bound
whenever it applies.

The following table compares the integer improvements for some small admissible
 values of $(r_0,\rho)$  satisfying $r_0>\frac{q^\rho-1}{q-1}$.

\medskip

\noindent For $q=2$:
\[
\begin{array}{c|c|c|c}
r_0 & \rho & \Delta_{\mathrm{DF}}^{\mathbb Z} & \Delta_{\mathrm{NS}}^{\mathbb Z} \\
\hline
2 & 1 & 1 & 1\\
3 & 1 & 1 & 1\\
4 & 1 & 1 & 1\\
4 & 2 & 2 & 3\\
5 & 1 & 1 & 1\\
5 & 2 & 2 & 3\\
6 & 1 & 1 & 1\\
6 & 2 & 2 & 3
\end{array}
\]

\medskip

\noindent For $q=3$:
\[
\begin{array}{c|c|c|c}
r_0 & \rho & \Delta_{\mathrm{DF}}^{\mathbb Z} & \Delta_{\mathrm{NS}}^{\mathbb Z} \\
\hline
2 & 1 & 1 & 2\\
3 & 1 & 1 & 2\\
4 & 1 & 1 & 2\\
5 & 1 & 1 & 2\\
5 & 2 & 4 & 8\\
6 & 1 & 1 & 2\\
6 & 2 & 4 & 8
\end{array}
\]

\medskip

The improvement is modest when $\rho$ is small, but becomes significantly larger
as $q^\rho$ grows. This is precisely the regime in which the
Năstase--Sissokho exact value, when available, gives a visibly stronger bound
than Drake--Freeman.
\end{remark}

\subsection{A vector-space-partition bound}
In this subsection we combine Theorem \ref{thm:general_reduction_partial_spreads} with Theorem \ref{thm:HKK-thm9}, a refinement to the Năstase--Sissokho bound from the theory of vector space partitions that particularly fits our QMDS code framework. 

\begin{proposition}\label{prop:vector-space-partition-general}
Let $\mathcal{C}\subseteq \mathbb{F}_q^{rn}$ be a QMDS linear code of type $[n,r,k,d]$ such that $k=er+r_0$ with $e$ integer and $0<r_0<r$ and let $m$ be an integer with $0\leq m\leq e-1$. Write $k-mr=a_mr_0+\rho_m$ with $a_m$ integer and $0<\rho_m<r_0$. Then
\[
\binom{n-m}{e-m}\leq 
\frac{q^{k-mr}-q^{\rho_m}}{q^{r_0}-1}
-\min\bigl(q^{\rho_m}-1,\,(q-1)(r_0-1)\bigr).
\]
\end{proposition}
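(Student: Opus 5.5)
The plan is to combine Theorem \ref{thm:general_reduction_partial_spreads} with the Honold--Kiermaier--Kurz bound (Theorem \ref{thm:HKK-thm9}), choosing the free parameters $z,u$ so that $z$ is as small as possible. As in the proof of Proposition \ref{prop:drake-freeman-general}, Corollary \ref{cor:faithful-truncation} gives a faithful family of type $[n,k-r,k,e]$. Theorem \ref{thm:general_reduction_partial_spreads} then gives $\binom{n-m}{e-m}\leq \mu_q(k-mr,r_0)$. It remains to show that
\[
\mu_q(k-mr,r_0)\leq \frac{q^{k-mr}-q^{\rho_m}}{q^{r_0}-1}-\min\bigl(q^{\rho_m}-1,(q-1)(r_0-1)\bigr).
\]

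I will apply Theorem \ref{thm:HKK-thm9} with $v=k-mr$, $s=r_0$, $\rho=\rho_m$ and $t=a_m$, and check its hypotheses in turn.
\begin{itemize}
\item $1\leq\rho_m<r_0$ holds by assumption.
\item $s>\rho$ is the same as $r_0>\rho_m$, which also holds.
\item $t\geq 2$: since $m\leq e-1$, we have $k-mr\geq r+r_0>2r_0$, so $a_m=\lfloor (k-mr)/r_0\rfloor\geq 2$.
\end{itemize}
The quantity $N:=\frac{q^{\rho_m}-1}{q-1}$ is a positive integer. I set
\[
z=\max(0,\,N+1-r_0),\qquad u=\max(0,\,r_0-N-1).
\]
Both are nonnegative integers, and $N+1-z+u=r_0=s$, as required.

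Next I simplify the resulting bound. Using $q^{v-s}q^s=q^v$, we get
\[
\frac{q^{v-s}-q^{\rho}}{q^{s}-1}q^s=\frac{q^v-q^{\rho+s}}{q^s-1}=\frac{q^v-q^{\rho}}{q^s-1}-q^{\rho}.
\]
Hence Theorem \ref{thm:HKK-thm9} reads
\[
\mu_q(v,r_0)\leq \frac{q^v-q^{\rho_m}}{q^{r_0}-1}-\bigl(q^{\rho_m}-1-z(q-1)\bigr).
\]
The final step is to evaluate the subtracted term. If $r_0\geq N+1$, then $z=0$ and the term is $q^{\rho_m}-1$. Otherwise it is
\[
q^{\rho_m}-1-(q-1)(N+1-r_0)=(q-1)N-(q-1)(N+1-r_0)=(q-1)(r_0-1).
\]
Since $(q-1)(r_0-1)\geq q^{\rho_m}-1$ is equivalent to $r_0\geq N+1$, in both cases the term equals $\min\bigl(q^{\rho_m}-1,(q-1)(r_0-1)\bigr)$, which completes the argument.

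The main obstacle is this parameter choice: making sure the HKK hypotheses (integrality, $t\geq2$, $s>\rho$) hold in every case, and recognizing that minimizing $z$ produces exactly the stated minimum. The rest is routine algebra.
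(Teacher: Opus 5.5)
Your proof is correct and follows the paper's argument step for step: reduce to $\mu_q(k-mr,r_0)$ via Corollary~\ref{cor:faithful-truncation} and Theorem~\ref{thm:general_reduction_partial_spreads}, then apply Theorem~\ref{thm:HKK-thm9} with $v=k-mr$, $s=r_0$, $\rho=\rho_m$ and $z=\max\bigl(0,\frac{q^{\rho_m}-1}{q-1}+1-r_0\bigr)$, and evaluate the subtracted term in the two cases. If anything, you are slightly more explicit than the paper, since you also check $s>\rho$ and show why that term equals the stated minimum in each case.
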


\begin{proof}

By Corollary \ref{cor:faithful-truncation}, there exists a family of subspaces of type $[n,k-r,k,n-d]$ and by Theorem \ref{thm:general_reduction_partial_spreads},  we have that
\[
\binom{n-m}{e-m}\leq \mu_q(k-mr,r_0).
\]
We will bound $\mu_q(k-mr,r_0)$ applying Theorem \ref{thm:HKK-thm9} with $v=k-mr$, $s=r_0$ and $\rho=\rho_m$.
The condition $\rho_m\geq 1$ follows from the assumption $0<\rho_m<r_0$. Also, $a_m\geq 2$ follows from $k-mr=(e-m)r+r_0\geq r+r_0>2r_0$.  Moreover, there always exist integers $u_m,z_m\geq 0$ such that $r_0=\frac{q^{\rho_m}-1}{q-1}+1-z_m+u_m$,
for instance by taking
\[
z_m=\max\left\{0,\ \frac{q^{\rho_m}-1}{q-1}+1-r_0\right\}
\]
and choosing $u_m$ accordingly. Hence Theorem \ref{thm:HKK-thm9} yields
\[
\binom{n-m}{e-m}\leq
\frac{q^{k-mr}-q^{r_0+\rho_m}}{q^{r_0}-1}+1+z_m(q-1)=\frac{q^{k-mr}-q^{\rho_m}}{q^{r_0}-1}
-\bigl(q^{\rho_m}-1-z_m(q-1)\bigr).
\]
Now, if $r_0>\frac{q^{\rho_m}-1}{q-1}$
then $z_m=0$, and hence $q^{\rho_m}-1-z_m(q-1)=q^{\rho_m}-1$. Otherwise, $z_m=\frac{q^{\rho_m}-1}{q-1}+1-r_0$ and $q^{\rho_m}-1-z_m(q-1)=(q-1)(r_0-1)$.
Therefore, in all cases,
\[
q^{\rho_m}-1-z_m(q-1)
=
\min\bigl(q^{\rho_m}-1,\,(q-1)(r_0-1)\bigr).
\]
Substituting into the previous inequality, we obtain the desired result.
\end{proof}

\begin{remark}
Since $k=er+r_0$, the bound in Proposition
\ref{prop:vector-space-partition-general} can be written as
\[
\binom{n-m}{e-m}
\leq
\frac{q^{(e-m)r+r_0}-q^{\rho_m}}{q^{r_0}-1}
-\min\bigl(q^{\rho_m}-1,\,(q-1)(r_0-1)\bigr).
\]
Equivalently,
\[
\binom{n-m}{e-m}
\leq
Dq^{(e-m)r}
-\frac{q^{\rho_m}}{q^{r_0}-1}
-\min\bigl(q^{\rho_m}-1,\,(q-1)(r_0-1)\bigr),
\]
with $D=\frac{q^{r_0}}{q^{r_0}-1}$. Since $0<\rho_m<r_0<r$, the last two terms are of lower order with respect to
$q^{(e-m)r}$. Hence, for every admissible value of $m$, the leading coefficient is
the same, namely
\[
D=\frac{q^{r_0}}{q^{r_0}-1}<2.
\]
Therefore, by Lemma \ref{lem:asymptotic-m-choice}, the tightest bound for
sufficiently large $r$ is obtained for the largest admissible value of $m$. In
particular, if the remainder $\rho$ of
$r+r_0$ modulo $r_0$ is nonzero, then the tightest asymptotic choice is
$m=e-1$. This motivates the next result.
\end{remark}

\begin{theorem}\label{thm:vector-space-partition}
    Let $\mathcal{C}\subseteq \mathbb{F}_q^{rn}$ be a linear QMDS code of type $[n,r,k,d]$ such that $k=er+r_0$ with $e$ integer and $0<r_0<r$. Write $r+r_0=ar_0+\rho$ with  $a$ integer and $0<\rho<r_0$. Then

\[
n\leq 
e+q^r-1+\frac{q^{r}-q^{\rho}}{q^{r_0}-1}
-\min\bigl(q^{\rho}-1,\,(q-1)(r_0-1)\bigr).
\]
\end{theorem}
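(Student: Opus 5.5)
The plan is to specialize Proposition \ref{prop:vector-space-partition-general} to $m=e-1$ and rearrange, as was done for Theorems \ref{thm:best-drake-freeman-bound} and \ref{thm:nastase-me1}.

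First, check the hypotheses of Proposition \ref{prop:vector-space-partition-general} for $m=e-1$.
\begin{itemize}
\item We need $0\le e-1\le e-1$, which holds as long as $e\geq 1$. This is automatic, since $k=er+r_0$ with $0<r_0<r$ and $k>r$ in the setting of Corollary \ref{cor:faithful-truncation}. If $e=0$ were allowed, $k<r$, which is outside the framework of the families used.
\item With $m=e-1$ we get $k-mr=r+r_0$, so the decomposition $k-mr=a_mr_0+\rho_m$ is exactly $r+r_0=ar_0+\rho$. Hence $\rho_m=\rho$, and the hypothesis $0<\rho<r_0$ of the theorem is precisely the hypothesis $0<\rho_m<r_0$ required by the proposition.
\end{itemize}

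Second, since $e-m=1$, the binomial coefficient is $\binom{n-e+1}{1}=n-e+1$. The proposition therefore gives
\[
n-e+1\leq \frac{q^{r+r_0}-q^{\rho}}{q^{r_0}-1}-\min\bigl(q^{\rho}-1,\,(q-1)(r_0-1)\bigr).
\]

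Third, simplify the fraction using
\[
\frac{q^{r+r_0}-q^{\rho}}{q^{r_0}-1}=q^r+\frac{q^r-q^\rho}{q^{r_0}-1}.
\]
This identity follows from $q^{r+r_0}-q^\rho=q^r(q^{r_0}-1)+(q^r-q^\rho)$. Adding $e-1$ to both sides of the inequality then yields
\[
n\leq e-1+q^r+\frac{q^r-q^\rho}{q^{r_0}-1}-\min\bigl(q^{\rho}-1,\,(q-1)(r_0-1)\bigr),
\]
which is exactly the stated bound.

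No step is a genuine obstacle; the argument is a direct substitution. The only points needing care are:
\begin{itemize}
\item identifying $\rho_m$ with $\rho$ correctly;
\item confirming that the auxiliary requirement $a_m\geq 2$ from Theorem \ref{thm:HKK-thm9} holds. This was already verified inside the proof of Proposition \ref{prop:vector-space-partition-general}, since $r+r_0>2r_0$.
\end{itemize}
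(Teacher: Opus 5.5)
Your proposal is correct and is exactly the paper's proof: take $m=e-1$ in Proposition \ref{prop:vector-space-partition-general}, note that $\rho_{e-1}=\rho$ and $\binom{n-e+1}{1}=n-e+1$, and rearrange using $\frac{q^{r+r_0}-q^\rho}{q^{r_0}-1}=q^r+\frac{q^r-q^\rho}{q^{r_0}-1}$. One quibble: $e\geq 1$ does not follow automatically from the stated hypotheses, since for $e=0$ the bound actually fails (the diagonal codes $\{(\mathbf{c},\ldots,\mathbf{c})\}$ have $d=n$ for every $n$); it is the standing assumption $k>r$, which the paper relies on in the same way.
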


\begin{proof}
It suffices to take $m=e-1$ in Proposition \ref{prop:vector-space-partition-general}.
\end{proof}

\begin{remark}
Once again, one can easily compare this bound to the integer part of Corollary \ref{cor:ball_recovered}, the bound derived from the packing bound and define the improvement
\[
\Delta^{\mathbb{Z}}_{\mathrm{VSP}}:=\min\bigl(q^\rho-1,\,(q-1)(r_0-1)\bigr).
\]
On the other hand, the integer improvements obtained from
Theorem \ref{thm:best-drake-freeman-bound} and Theorem \ref{thm:nastase-me1}
are
\[
\Delta_{\mathrm{DF}}^{\mathbb Z}=\lfloor\omega\rfloor+1,
\qquad
\Delta_{\mathrm{NS}}^{\mathbb Z}=q^\rho-1.
\]
As we mentioned earlier, Theorem \ref{thm:vector-space-partition} exactly recovers the  bound from Theorem \ref{thm:nastase-me1} when we are in the Năstase--Sissokho regime. Indeed $q^\rho-1\leq(q-1)(r_0-1)$ if, and only if, $r_0>\frac{q^\rho-1}{q-1}$ because they are both integer quantities.

Now, assume that $(q-1)(r_0-1)<q^\rho-1$. The vector-space-partition bound improves Drake--Freeman if and only if $\Delta_{\mathrm{VSP}}>\Delta_{\mathrm{DF}}$, which yields $(q-1)(r_0-1)>\lfloor\omega\rfloor+1$.
Since the left-hand side is an integer, this is equivalent to 
\[
\omega<(q-1)(r_0-1)-1.
\] 
Let $Q:=q^{r_0}$, $B:=q^\rho$ and $A:=(q-1)(r_0-1)$. Substituting in $2\omega<2A-2$ and moving the square root to the left-hand side yields
\[
\sqrt{1+4Q(Q-B)}<2Q-2B+2A-1.
\]
Since the right-hand side is positive, squaring, simplifying and factoring shows that this holds if and only if
\[
Q(2A-B-1)+(B-A)(B-A+1)>0.
\]
Thus, outside the Năstase--Sissokho regime, the vector-space-partition bound
from Theorem \ref{thm:vector-space-partition} is tighter than the
Drake--Freeman bound exactly when
\[
q^{r_0}\bigl(2(q-1)(r_0-1)-q^\rho-1\bigr)
+\bigl(q^\rho-(q-1)(r_0-1)\bigr)\bigl(q^\rho-(q-1)(r_0-1)+1\bigr)>0.
\]
This criterion is exact but not very transparent. A simpler sufficient condition is obtained by observing that $Q>0$ and  the product of two consecutive integers cannot be negative $(B-A)(B-A+1)\geq0$. This yields that $2A-B-1> 0$ is a sufficient condition. That implies that the vector-space-partition is stronger than Drake--Freeman whenever
\[
2(q-1)(r_0-1)\geq q^\rho+1.
\]
In particular, the vector-space-partition bound improves Drake--Freeman  even when the Năstase--Sissokho condition does not hold at least for the intermediate range
\[
\frac{q^\rho+1}{2}\leq (q-1)(r_0-1)<q^\rho-1.
\]
\end{remark}
\begin{remark}

For convenience, the following tables compare the integer improvements for
small parameters. A dash in the Năstase--Sissokho column means that the condition $r_0>\frac{q^\rho-1}{q-1}$ does not hold.

\medskip

\noindent For $q=2$:
\[
\begin{array}{c|c|c|c|c}
r_0 & \rho & \Delta_{\mathrm{DF}}^{\mathbb Z} &
\Delta_{\mathrm{NS}}^{\mathbb Z} &
\Delta_{\mathrm{VSP}}^{\mathbb Z} \\
\hline
2 & 1 & 1 & 1 & 1\\
3 & 1 & 1 & 1 & 1\\
3 & 2 & 2 & - & 2\\
4 & 1 & 1 & 1 & 1\\
4 & 2 & 2 & 3 & 3\\
4 & 3 & 3 & - & 3\\
5 & 1 & 1 & 1 & 1\\
5 & 2 & 2 & 3 & 3\\
5 & 3 & 4 & - & 4\\
5 & 4 & 7 & - & 4\\
6 & 1 & 1 & 1 & 1\\
6 & 2 & 2 & 3 & 3\\
6 & 3 & 4 & - & 5\\
6 & 4 & 7 & - & 5\\
6 & 5 & 13 & - & 5
\end{array}
\]

\medskip

\noindent For $q=3$:
\[
\begin{array}{c|c|c|c|c}
r_0 & \rho & \Delta_{\mathrm{DF}}^{\mathbb Z} &
\Delta_{\mathrm{NS}}^{\mathbb Z} &
\Delta_{\mathrm{VSP}}^{\mathbb Z} \\
\hline
2 & 1 & 1 & 2 & 2\\
3 & 1 & 1 & 2 & 2\\
3 & 2 & 4 & - & 4\\
4 & 1 & 1 & 2 & 2\\
4 & 2 & 4 & - & 6\\
4 & 3 & 12 & - & 6\\
5 & 1 & 1 & 2 & 2\\
5 & 2 & 4 & 8 & 8\\
5 & 3 & 13 & - & 8\\
5 & 4 & 36 & - & 8\\
6 & 1 & 1 & 2 & 2\\
6 & 2 & 4 & 8 & 8\\
6 & 3 & 13 & - & 10\\
6 & 4 & 39 & - & 10\\
6 & 5 & 109 & - & 10
\end{array}
\]

\medskip

These tables show the three regimes clearly. In the Năstase--Sissokho regime,
the vector-space-partition bound coincides with Theorem \ref{thm:nastase-me1}.
Outside that regime, the vector-space-partition bound may still improve the
packing bound and, for some intermediate values of $\rho$, it can also improve
the Drake--Freeman bound. For large $\rho$, however, the Drake--Freeman
correction may become stronger.
\end{remark}

\section*{Acknowledgements}

This work has been supported by MICIU/AEI/ 10.13039/501100011033 and ERDF/EU (Grant no. PID2022-138906NB-C21).

\bibliographystyle{plain}

\end{document}